\newtheorem{@theorem}{Theorem}[section]
\newenvironment{theorem}{\begin{@theorem}}{\end{@theorem}}
\newtheorem{lemma}{Lemma}[section]
\newtheorem{proposition}{Proposition}[section]
\newtheorem{Definition}{Definition}[section]
\begin{document}

\title{On the Size of Minimal Separators for Treedepth Decomposition}
\author{
    Zijian Xu,
    Vorapong Suppakitpaisarn\\
    The University of Tokyo\\
    \texttt{xuzijian@ms.k.u-tokyo.ac.jp}, \\
    \texttt{vorapong@is.s.u-tokyo.ac.jp} \\
}

\date{}

\maketitle

\begin{abstract}
\small\baselineskip=9pt
Treedepth decomposition has several practical applications and can be used to speed up many parameterized algorithms. There are several works aiming to design a scalable algorithm to compute exact treedepth decompositions. Those include works based on a set of all minimal separators. In those algorithms, although a number of minimal separators are enumerated, the minimal separators that are used for an optimal solution are empirically very small. Therefore, analyzing the upper bound on the size of minimal separators is an important problem because it has the potential to significantly reduce the computation time. A minimal separator $S$ is called an optimal top separator if $td(G) = |S| + td(G \backslash S)$, where $td(G)$ denotes the treedepth of $G$. Then, we have two theoretical results on the size of optimal top separators. (1) For any $G$, there is an optimal top separator $S$ such that $|S| \le 2tw(G)$, where $tw(G)$ is the treewidth of $G$. (2) For any $c < 2$, there exists a graph $G$ such that any optimal top separator $S$ of $G$ have $|S| > c \cdot tw(G)$, i.e., the first result gives a tight bound on the size of an optimal top separator.
\end{abstract}
\section{Introduction}

Treedepth decomposition, also known as vertex ranking number, cycle rank, or minimum height of elimination tree, is an important combinatorial optimization problem because of its applications to VLSI design \cite{leiserson1980area,sen1992graph} and numerical algorithms \cite{liu1990role}. For a particular classes of graphs, when we have an optimal treedepth decomposition of the graphs, we can have a faster algorithm for classical problems such as \textsc{WeightedMatching}, \textsc{MinimumWeightCycle}, \textsc{SteinerTree}, and \textsc{FeedbackVertexSet} \cite{sugoi-parameterized-td, wata-fully-fpt}. 

Solving the treedepth decomposition problem is NP-hard, even for chordal graphs \cite{treedepth-chordal}, but there are many works aiming to propose algorithms with small computational complexity. Those include an exact exponential algorithm whose computation time is $O^*(1.9602^n)$ in \cite{treedepth-exact-opt}, where $O^*$ notation hides the polynomial factor, an algorithm based on tree decompositions of input graphs in \cite{treedepth-parameterized-tw}, and an algorithm based on vertex cover solutions of input graphs in \cite{treedepth-vc}.

In addition to algorithms with small complexity, algorithms that can solve treedepth decomposition exactly, and are scalable in experiments are also proposed in many recent works. For example, an algorithm based on SAT solver is proposed in \cite{sat-encoding, exact-treedepth-sea}. At a competition called PACE 2020 \cite{PACE2020}, participants are asked to submit exact and scalable algorithms for the problem. The development of algorithms for treedepth decomposition has been significantly advanced there. The most scalable software could solve the problem only when the input graph has no more than $30$ nodes before, while many solvers can solve up to hundreds of nodes at the competition.

Many algorithms submitted to the PACE 2020 competition enumerate minimal separators as a subroutine. Especially, the second winning solver \cite{laakeri} and the fifth winning solver \cite{pace-xu} are based on the following theorem.

\begin{theorem}[\cite{treedepth-dp-minsep}]
Let $G$ be a graph that is not complete. Let $\Delta_G$ be the set of all minimal separators of $G$. Then, the treedepth $td(G)$ is
\begin{align}\label{eq:treedepth-rec}
    td(G) = \min\limits_{S \in \Delta_G} |S| + td(G \backslash S)
\end{align}
\end{theorem}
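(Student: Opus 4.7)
The plan is to establish both inequalities of the equality separately.

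For the easy direction, $td(G) \leq \min_{S \in \Delta_G} |S| + td(G \backslash S)$, I fix any minimal separator $S$ and exhibit an elimination forest of $G$ of height $|S| + td(G \backslash S)$. Starting from an optimal elimination forest $F'$ of $G \backslash S$, I place the vertices of $S$ in any order as a chain at the root and attach every tree of $F'$ below the bottom of the chain. Every edge of $G$ is then handled: edges inside $S$ are ancestor–descendant along the chain, edges between $S$ and $V(G) \backslash S$ have their $S$-endpoint as ancestor, and edges inside $V(G) \backslash S$ are handled by the validity of $F'$. Hence the construction yields a valid elimination tree of the required height.

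For the lower direction, $td(G) \geq \min_S$, I first dispose of the disconnected case by noting that $\emptyset$ itself is a minimal separator, giving $|\emptyset| + td(G \backslash \emptyset) = td(G)$. For connected and non-complete $G$, I pick any optimal elimination tree $T$. Using the standard fact that $td(H) = |V(H)|$ iff $H$ is a clique, I conclude that $T$ is not a single path and hence contains a branching vertex. Let $v$ be the branching vertex nearest the root and let $S$ be the vertex set of the root-to-$v$ path in $T$ (including $v$ itself). Because every edge of $G$ must connect an ancestor to a descendant in $T$, the vertex sets of the subtrees $T_1, \ldots, T_m$ rooted at the children of $v$ are pairwise non-adjacent in $G$; hence $S$ separates $G$ and $G \backslash S = \bigsqcup_i G[V(T_i)]$. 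Since each $T_i$ is a valid elimination tree for $G[V(T_i)]$, taking heights yields
\[
    td(G) = \mathrm{height}(T) = |S| + \max_i \mathrm{height}(T_i) \geq |S| + td(G \backslash S).
\]

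The separator $S$ thus obtained need not be minimal, so I finish by trading it for one that is. Fixing any $a, b$ in distinct components of $G \backslash S$ and letting $S^* \subseteq S$ be a minimal $a$-$b$ separator gives $S^* \in \Delta_G$ and $|S^*| \leq |S|$. Prepending the $|S \backslash S^*|$ leftover vertices as a chain above an optimal elimination forest of $G \backslash S$ shows $td(G \backslash S^*) \leq td(G \backslash S) + |S \backslash S^*|$, so
\[
    |S^*| + td(G \backslash S^*) \leq |S| + td(G \backslash S) \leq td(G),
\]
which witnesses the reverse inequality with a member of $\Delta_G$. The main obstacle, as I see it, is this lower-bound pipeline: isolating a structurally meaningful spine inside an \emph{arbitrary} optimal elimination tree, and then showing that the passage from a general separator to a minimal one does not inflate $|S| + td(G \backslash S)$. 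The upper direction is a routine construction by contrast.
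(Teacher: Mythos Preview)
The paper does not actually prove this statement; it is quoted from \cite{treedepth-dp-minsep} as background, and the only related material (Section~\ref{naive-dp}) merely explains how to \emph{assemble} an optimal decomposition once a minimizing separator is known. So there is no in-paper proof to compare against.

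That said, your argument is correct and is essentially the standard derivation of this recursion. The upper bound is the routine chain-plus-forest construction. For the lower bound, extracting the root-to-first-branching-vertex path $S$ from an arbitrary optimal elimination tree, and then passing to a minimal $a$--$b$ separator $S^*\subseteq S$ while checking $|S^*|+td(G\setminus S^*)\le |S|+td(G\setminus S)$ via the ``prepend $S\setminus S^*$ as a chain'' trick, is exactly the right pipeline and goes through as written. One small simplification: instead of invoking the fact that $td(H)=|V(H)|$ iff $H$ is a clique, you can argue directly that for any non-adjacent pair $u,v$ in a connected $G$, chaining $V\setminus\{u,v\}$ with $u$ and $v$ as siblings of the bottom node yields a valid decomposition of height $|V|-1$, so $td(G)<|V|$ and every optimal tree must branch.
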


When the treedepth is computed by Equation \eqref{eq:treedepth-rec}, the bottleneck is often the computation of $\Delta_G$ because the number of minimal separators can be exponential with respect to the number of nodes (See Figure \ref{fig:exp-sep}). In \cite{pace-xu}, based on an observation that some of the $S$ which achieves the minimum in Equation \eqref{eq:treedepth-rec} are usually very small, a conjecture is proposed that there exists a minimal separator of size at most treewidth and it achieves the minimum in Equation \eqref{eq:treedepth-rec}. Formally, for a graph $G$ that is not complete, let $\Delta_G^* := \{S \in \Delta_G \mid td(G) = |S| + td(G \backslash S)\}$. Then, for some $S \in \Delta_G^*$, they conjecture that $|S| \le tw(G)$, where $tw(G)$ denotes the treewidth of $G$. This conjecture is important because if it is true, we can significantly reduce the number of minimal separators that we have to enumerate and can suppress the computation time.

In this paper, we answer this conjecture by the following two theorems.
\begin{theorem}\label{thm:main-1}
Let $G$ be a graph that is not complete. Then, for some $S \in \Delta_G^*$, $|S| \le 2tw(G)$.
\end{theorem}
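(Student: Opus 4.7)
The plan is to exhibit an optimal top separator inside the ``trunk'' of an optimal treedepth decomposition, and to bound its size by combining the treedepth decomposition with a tree decomposition of $G$ of width $k = tw(G)$.

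First reduce to $G$ connected (otherwise apply the argument to a component realizing $td(G)$). Let $F$ be an optimal treedepth decomposition of $G$ rooted at $r$, of depth $h = td(G)$. Because $G$ is not complete, $F$ is not a single root-to-leaf path, so there is a shallowest vertex $v^*$ of $F$ with at least two children; let $P = (v_1, \ldots, v_d = v^*)$ be the path in $F$ from $r = v_1$ down to $v^*$, and let $C_1, \ldots, C_m$ ($m \ge 2$) be the connected components of $G \setminus P$, which coincide with the subtrees of $F$ hanging from the children of $v^*$.

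Pick $a \in C_i$, $b \in C_j$ with $i \neq j$. The ancestor-descendant condition on $F$ makes $a, b$ non-adjacent in $G$, and $P$ is an $a$-$b$ separator, so we can choose a minimal $a$-$b$ separator $S \subseteq P$. I would first verify $S \in \Delta_G^*$: the forest $F - S$ is a valid treedepth decomposition of $G \setminus S$ (ancestor-descendant relations survive deletion), and because $S \subseteq P$ and every root-to-leaf path of $F$ passes through all of $P$ (the leaves of $F$ lie below $v^*$, and each trunk vertex above $v^*$ has exactly one child), each such path loses exactly $|S|$ vertices after $S$ is removed. Therefore $td(G \setminus S) \le h - |S|$, which combined with the bound $|S| + td(G \setminus S) \ge td(G)$ from Equation~\eqref{eq:treedepth-rec} gives $|S| + td(G \setminus S) = td(G)$.

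The main obstacle is to choose $a, b$ so that $|S| \le 2k$. For this I would work in the minor $G'$ of $G$ obtained by contracting each $C_l$ into a super-vertex $c'_l$; then $tw(G') \le k$, and $|S|$ equals the minimum size of a $c'_i$-$c'_j$ vertex cut in $G'$ whose vertices all lie in $P$. Fixing a tree decomposition of $G'$ of width $k$ and a tree edge separating $c'_i$ from $c'_j$, the bag intersection at that edge already gives a cut of size $\le k+1$; if that intersection consists only of $P$-vertices we are done, while if it contains other super-vertices $c'_l$, each such $c'_l$ must be ``lifted'' back to a set of $P$-vertices via its $P$-neighbourhood, and a careful worst-case accounting of this lifting bounds the resulting $P$-separator by $2k$. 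The technically delicate step, which is the heart of the proof, is making this lifting quantitative and avoiding a looser bound like $2(k+1)$; the tightness of the constant $2$ is witnessed by the second theorem of the paper.
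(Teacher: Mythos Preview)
Your reduction to a minimal separator $S$ inside the trunk $P$ of an optimal decomposition is correct and clean: the argument that $F-S$ is a valid treedepth decomposition of $G\setminus S$ with height $h-|S|$, and hence that every such $S$ lies in $\Delta_G^*$, goes through exactly as you describe.

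The gap is in the size bound. In the contracted graph $G'$ each super-vertex $c'_l$ has $N_{G'}(c'_l)\subseteq P$, but nothing bounds $|N_{G'}(c'_l)|$ in terms of $k$; a component $C_l$ can be adjacent to \emph{all} of $P$. So when a bag of a width-$k$ tree decomposition of $G'$ separates $c'_i$ from $c'_j$ but contains some other $c'_l$, ``lifting'' $c'_l$ to its $P$-neighbourhood can blow the separator up to size $|P|$, not $2k$. You note that this accounting is ``the heart of the proof'' but do not carry it out, and I do not see a way to do so that yields the constant~$2$; freedom in choosing $i,j$ or the tree edge does not obviously help, because every separating bag may contain such $c'_l$'s. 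As written, the argument stops precisely at the hard step.

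The paper's proof takes a different route that sidesteps this difficulty. It never fixes an optimal decomposition or works inside its trunk. Instead it starts from an \emph{arbitrary} separator $S$ with $|S|>2tw(G)$ and invokes the balanced-separator lemma (Lemma~7.19 of \cite{parameterized-algorithms}) with $U=S$: this yields $S'$ of size at most $tw(G)+1$ such that every component of $G\setminus S'$ meets $S$ in at most $|S|/2$ vertices. One then builds a two-level decomposition with $S'$ on top and, within each component $G'_i$ of $G\setminus S'$, the set $V(G'_i)\cap S$ as a second layer; below that one uses optimal decompositions of the pieces, which are pieces of $G\setminus S$. The height is at most $|S'|+|S|/2+\max_{C\in\mathcal C(G\setminus S)}td(G[C])\le h(S)$, so no separator larger than $2tw(G)$ can be strictly needed. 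The balanced-separator lemma is exactly what makes the factor~$2$ appear; your tree-decomposition-bag argument does not have an analogous halving mechanism.
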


\begin{theorem}\label{thm:main-2}
For any $c < 2$, there exists a graph $G$ such that for any $S \in \Delta_G^*$, $|S| > c\cdot tw(G)$.
\end{theorem}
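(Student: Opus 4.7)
My plan is to construct, for each $\epsilon > 0$, a graph $G_\epsilon$ realizing the inequality, so that the family $\{G_\epsilon\}$ achieves a ratio approaching $2$. I would specify $G_\epsilon$ as a graph combining two ingredients: a ``base'' structure of treewidth $k = k(\epsilon)$ (for instance, a cyclic chain of cliques such as the $k$-th power $C_n^k$ of a long cycle, or a carefully tuned complete multipartite graph), together with additional vertices or edges arranged so that every minimal separator of size less than $(2 - \epsilon)\, k$ is ``useless'' for reducing treedepth, while a single balanced minimal separator $S^*$ of size about $2k$ cleanly cuts $G_\epsilon$ into shallow pieces.

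Next, I would verify $tw(G_\epsilon) = k$ via an explicit tree decomposition of width $k$ together with a matching lower bound from a clique subgraph or grid minor, and then compute $td(G_\epsilon)$ by using $S^*$ to upper-bound $td(G_\epsilon)$ through Equation~\eqref{eq:treedepth-rec} and analyzing the sub-components of $G_\epsilon \setminus S^*$ to obtain a matching lower bound.

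Finally comes the crux: for every minimal separator $S$ with $|S| \le (2 - \epsilon)\, k$, I would show that some connected component of $G_\epsilon \setminus S$ has treedepth strictly greater than $td(G_\epsilon) - |S|$, forcing $|S| + td(G_\epsilon \setminus S) > td(G_\epsilon)$ and ruling out $S \in \Delta_{G_\epsilon}^*$. This is the main obstacle. Theorem~\ref{thm:main-1} guarantees that some element of $\Delta_{G_\epsilon}^*$ has size at most $2\, tw(G_\epsilon)$, but excluding every strictly smaller minimal separator requires a structural argument tailored to the construction, showing that small cuts necessarily leave a residual component whose treedepth cannot be sufficiently reduced. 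This is where the precise choice of $G_\epsilon$ is decisive: the graph must be designed so that its treedepth is robust against the removal of fewer than $(2 - \epsilon)\, tw(G_\epsilon)$ vertices.
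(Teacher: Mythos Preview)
What you have written is a research plan, not a proof: no graph is actually specified, no treewidth or treedepth is computed, and the ``crux'' paragraph explicitly concedes that the main obstacle is unresolved. Worse, both of your candidate constructions fail. A complete multipartite graph is a cograph, and the paper itself proves (Section~\ref{sec:special}) that for every cograph any optimal top separator has size at most $tw(G)$; so no complete multipartite graph can give a ratio exceeding~$1$, let alone approach~$2$. The cycle power $C_n^k$ fares no better: for $n$ large its treewidth is $2k$, and every minimal separator consists of two blocks of $k$ consecutive vertices, hence has size exactly $2k = tw$; again the ratio is~$1$. Neither suggestion contains the mechanism that would force \emph{all} small minimal separators to be suboptimal.

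The paper's construction is of a completely different nature and hinges on an idea your outline does not anticipate. One takes an $n \times m$ grid and attaches to each vertex of the two short sides a path of length $2^k - 1$, with the far ends of all $2n$ paths tied to three extra vertices. The treewidth stays at $n + O(1)$ (grids have treewidth $n$; pendant paths and a bounded number of apex vertices add only a constant). The point is that $td(P_{2^k}) = k+1$ grows with $k$, so for $k$ large the treedepth of the whole graph is governed by the need to shorten all $2n$ long paths simultaneously. A sequence of lemmas on ``brooms'' and ``double brooms'' shows that any optimal decomposition must, within its first $2n$ levels, hit every one of the $2n$ paths, and that the only way to do this with a \emph{separator} is to take the $2n$ grid vertices where the paths attach. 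This yields a unique optimal top separator of size $2n$ against treewidth $n+3$, giving ratio $2n/(n+3) \to 2$. The leverage comes from the logarithmic treedepth of long paths, a phenomenon entirely absent from clique-based constructions like $C_n^k$ or complete multipartite graphs, where treedepth and treewidth are within a constant of each other.
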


By Theorem \ref{thm:main-2}, we show that the conjecture is false.

\subsection{Paper Organization}
In Section \ref{sec:preliminary}, we summarize some important concepts that are used throughout this paper, which include the definitions of the treedepth, the treewidth and minimal separators.

In Section \ref{sec:main-theorems}, we give the proofs for Theorem \ref{thm:main-1} and Theorem \ref{thm:main-2}. The proof of Theorem \ref{thm:main-2} is constructive.

In Section \ref{sec:special}, we give some examples of graph classes that have smaller upper bound on the size of $S \in \Delta_G^*$ than $2tw$. Specifically, we show that for chordal graphs, outerplanar graphs and cographs, the size of minimal separators are at most $tw$.
\section{Preliminary}\label{sec:preliminary}
In this section, we define the treewidth, the treedepth, minimal separators, and the top separator.
\subsection{Notation}
In this paper, $G$ denotes an undirected unweighted graph. $V(G)$ or simply $V$ denote the vertex set. We use $n$ and $m$ for the number of nodes and edges, respectively. 

For a vertex set $S \subseteq V$, $G[S]$ is the subgraph induced by $S$. We use $G \backslash S$ for the graph obtained from $G$ by removing $S$, that is, $G \backslash S = G[V \backslash S]$. When $S = \{v\}$, we simply write $G \backslash v$ for short. Lastly, we write $\mathcal{C}(G)$ to denote the set of connected components of (possibly connected) graph $G$. A graph $H$ is called a minor of $G$ if $H$ can be obtained from $G$ by contracting some edges, removing some edges, and removing some isolated vertices.

\subsection{Treewidth}

Treewidth of $G$, denoted by $tw(G)$ or $tw$, is the number to show how much $G$ is close to being a tree. The number $tw(G)$ is one when $G$ is a tree and it is as large as $|V| - 1$ when $G$ is a completed graph. 

Before giving a definition of treewidth, we define tree decomposition in the following definition.

\begin{Definition}[Tree decomposition] A tree decomposition of a graph $G$ can be defined as $(\mathcal{T}, f)$ where $\mathcal{T}$ is a tree and $f$ is a function from $V(\mathcal{T})$ to $2^{V(G)}$ with the following properties:
\begin{enumerate}
\item $\bigcup\limits_{\tau \in \mathcal{T}} f(\tau) = V(G)$;
\item For each edge ${u,v}$ of $G$, there is a node $\tau \in V(\mathcal{T})$ such that ${u,v} \subseteq f(\tau)$;
\item For each $v \in V(G)$, if $\mathcal{T}_v$ is a subtree of $\mathcal{T}$ induced by the set of nodes $\{\tau \in V(\mathcal{T}): v \in f(\tau)\}$, then $\mathcal{T}_v$ is connected.
\end{enumerate}
\end{Definition}

For each $\tau \in V(\mathcal{T})$, we call the node set $f(\tau)$ as a bag of $\mathcal{T}$. We denote the maximum bag size of a tree decomposition $(\mathcal{T}, f)$ by $b(\mathcal{T}, f) := \max\limits_{\tau \in V(\mathcal{T})} |f(\tau)|$. Treewidth of $G$ is then can be defined as in the following definition:

\begin{Definition}[Treewidth] A tree decomposition $(\mathcal{T}^*, f^*)$ is an optimal tree decomposition of $G$ if, for any tree decomposition $(\mathcal{T}, f)$, $b(\mathcal{T}, f) \geq b(\mathcal{T}^*, f^*)$. Treewidth of $G$ or $tw(G)$ is the maximum bag size of $(\mathcal{T}^*, f^*)$, i.e. $tw(G) := b(\mathcal{T}^*, f^*)$. 
\end{Definition}

By the definition of tree decomposition, we have the following proposition for unconnected graph $G$.

\begin{proposition}\label{prop:tw-not-connected}
Let $G$ be a graph that is not connected. Then, the treewidth of $G$ is
\begin{align*}
    tw(G) = \max\limits_{C \in \mathcal{C}} tw(G[C]).
\end{align*}
\end{proposition}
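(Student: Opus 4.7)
The plan is to prove the equality by establishing the two inequalities separately.

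For the easier direction, $tw(G) \geq \max_{C \in \mathcal{C}(G)} tw(G[C])$, I would start from any tree decomposition $(\mathcal{T}, f)$ of $G$ and, for a fixed component $C$, define $f_C : V(\mathcal{T}) \to 2^{V(G[C])}$ by $f_C(\tau) = f(\tau) \cap C$. I would then verify the three defining properties for $(\mathcal{T}, f_C)$ as a tree decomposition of $G[C]$: property (1) follows because every $v \in C$ is already covered by some bag in $f$, property (2) because every edge of $G[C]$ is already covered by some bag in $f$ (and both endpoints lie in $C$), and property (3) because intersecting every bag with $C$ removes $v \in C$ from no bag, so the subtree induced by $\{\tau : v \in f_C(\tau)\}$ equals the one for $f$ and is therefore connected. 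Since $|f_C(\tau)| \leq |f(\tau)|$ for every node, the width of $(\mathcal{T}, f_C)$ is at most that of $(\mathcal{T}, f)$, and taking $(\mathcal{T}, f)$ optimal for $G$ yields the inequality.

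For the reverse direction, $tw(G) \leq \max_{C \in \mathcal{C}(G)} tw(G[C])$, I would take optimal tree decompositions $(\mathcal{T}_C, f_C)$ for each connected component $C$, pick an arbitrary node $\tau_C \in V(\mathcal{T}_C)$ in each, and glue the trees into a single tree $\mathcal{T}$ by connecting the chosen $\tau_C$'s in any tree-like fashion (for instance, arranging them along a path). Define $f$ on $V(\mathcal{T}) = \bigsqcup_C V(\mathcal{T}_C)$ by $f(\tau) = f_C(\tau)$ when $\tau \in V(\mathcal{T}_C)$. Properties (1) and (2) are immediate from the component decompositions: every vertex and every edge lives in exactly one component and is covered there. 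Property (3) is the point that deserves care; however, because each $v \in V(G)$ belongs to a unique component $C_v$, the set $\{\tau : v \in f(\tau)\}$ lies entirely in $V(\mathcal{T}_{C_v})$, and there it is already a connected subtree by assumption, so it remains connected in $\mathcal{T}$.

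Finally, since $b(\mathcal{T}, f) = \max_C b(\mathcal{T}_C, f_C) = \max_C tw(G[C])$, this exhibits a tree decomposition of $G$ of the desired width, completing the inequality.

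The main obstacle, modest as it is, will be verifying property (3) in the second direction, since the reader needs to see that connectivity of the subtree is not broken by the gluing; the key observation is the component uniqueness of each vertex, which confines its subtree to a single $\mathcal{T}_C$.
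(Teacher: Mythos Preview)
Your proposal is correct and follows essentially the same approach as the paper: the paper's proof is a one-line sketch saying that an optimal tree decomposition of $G$ is obtained by connecting optimal decompositions of the components into a single tree, which is exactly your second direction. Your write-up is more careful than the paper's, also spelling out the easy lower bound and the verification of the three tree-decomposition axioms, but there is no substantive difference in method.
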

\begin{proof}
An optimal tree decomposition of $G$ can be constructed by connecting each decomposition for the connected components to make it a tree.
\end{proof}

Next, we define the treedepth, which is the main topic of this paper. The treedepth of $G$, denoted by $td(G)$ or $td$ is the number to show how much $G$ is close to a star graph.

\begin{Definition}[Rooted forest]
A rooted graph $T$ is called a rooted forest is any connected component of $T$ is a rooted tree. The height of $T$ is defined as the maximum height among the rooted trees in $T$.
\end{Definition}

A treedepth decomposition of $G$ is defined as follows:
\begin{Definition}[Treedepth decomposition] 
Let $G$ be a connected graph. A rooted forest $T$ is called a treedepth decomposition of $G$ if
\begin{enumerate}
    \item $V(T) = V(G)$.
    \item For any $(u, v) \in E(G)$, $u$ and $v$ satisfies ancestor-descendant condition in $T$, that is, there exists a rooted tree $T'$ in $T$ and $u$ is an ancestor of $v$ or $v$ is an ancestor of $u$ in $T'$.
\end{enumerate}
\end{Definition}

By the definition of treedepth decomposition, we can define the treedepth of graph $G$ in the following definition.

\begin{Definition}[Treedepth]
The treedepth of graph $G$, denoted by $td(G)$ is the minimum height among all treedepth decompositions of $G$.
\end{Definition}

By the definition of the treedepth decomposition and the treedepth, we have the following proposition.
\begin{proposition}\label{prop:td-not-connected}
Let $G$ be a graph that is not connected. Then, the treedepth of $G$ is
\begin{align*}
    td(G) = \max\limits_{C \in \mathcal{C}(G)}td(G[C])
\end{align*}.
\end{proposition}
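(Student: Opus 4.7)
The proof naturally splits into the two inequalities. For the upper bound $td(G) \le \max_{C \in \mathcal{C}(G)} td(G[C])$, the plan is explicit construction: for each connected component $C$, take an optimal treedepth decomposition $T_C$ of $G[C]$, and form the disjoint union $T := \bigsqcup_{C \in \mathcal{C}(G)} T_C$. This is still a rooted forest. Both conditions of the treedepth decomposition definition transfer immediately, since the components partition $V(G)$ and every edge of $G$ lies entirely inside some component $C$, where its endpoints already satisfy ancestor-descendant in $T_C$. The height of $T$ is exactly the maximum height of the $T_C$'s, yielding the desired inequality.

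For the lower bound $td(G) \ge \max_{C \in \mathcal{C}(G)} td(G[C])$, the plan is to fix an optimal treedepth decomposition $T$ of $G$ and, for each component $C$, extract a treedepth decomposition of $G[C]$ of height at most the height of $T$. The natural candidate is the \emph{contraction} of $T$ to $V(C)$: keep only the vertices in $V(C)$, and in the new forest declare $u$ to be the parent of $v$ iff $u$ is the lowest strict ancestor of $v$ in $T$ that lies in $V(C)$. Contraction can only decrease height. Validity follows because any edge $(u,v)$ of $G[C]$ is an edge of $G$, so $u$ and $v$ already satisfy ancestor-descendant in $T$; removing intermediate non-$V(C)$ vertices preserves that relation.

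The one place that needs a small argument is verifying that this contracted structure on $V(C)$ is actually well-formed as a rooted forest, and in particular that the resulting parent relation on $V(C)$ is consistent. The plan here is to prove a connectivity lemma: since $C$ is a connected component of $G$, all vertices of $V(C)$ lie in a single rooted tree of $T$ and share a common highest ancestor $r_C \in V(C)$. This is shown by induction along any path $r_C = v_0, v_1, \ldots, v_k = w$ in $G[C]$: each edge forces an ancestor-descendant relation in $T$, and if some $v_{i+1}$ were not a descendant of $r_C$ it would have to be a strict ancestor of $v_i$ lying above $r_C$, contradicting the choice of $r_C$ as the highest vertex of $V(C)$ in $T$. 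Once this lemma is established, the contracted structure is a single rooted tree with root $r_C$ and the lower bound is immediate; combining the two inequalities finishes the proof. I expect the connectivity lemma to be the only nontrivial step, the rest being direct verification from the definitions.
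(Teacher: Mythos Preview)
Your proof is correct. The paper itself does not give a proof of this proposition at all; it simply introduces it with the sentence ``By the definition of the treedepth decomposition and the treedepth, we have the following proposition'' and leaves it at that. Your two-inequality argument via disjoint union (for $\le$) and contraction to $V(C)$ (for $\ge$) is precisely the natural elaboration the paper is implicitly invoking.

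One small remark: the connectivity lemma you single out as ``the only nontrivial step'' is in fact not needed. The contracted structure on $V(C)$ is automatically a well-formed rooted forest regardless of whether all of $V(C)$ sits under a common $r_C$: each vertex has at most one parent (its lowest strict $T$-ancestor in $V(C)$), and acyclicity is inherited from $T$. Since the paper's definition of treedepth decomposition already allows rooted forests, and since the ancestor--descendant relation for every edge of $G[C]$ survives contraction, the height bound follows immediately without ever establishing that the contraction is a single tree.
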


Unless mentioned otherwise, we assume that $G$ is a connected graph in this paper. However, by Proposition \ref{prop:tw-not-connected} and Proposition \ref{prop:td-not-connected}, it is easy to extend the discussion for $G$ that is not connected.

A treedepth decomposition of $G$ is called \emph{optimal} if its height is equal to the treedepth of the graph. 

It is known that, for any graph $G$, $td(G) \geq tw(G) + 1$ (See \cite{approximation-all, treedepth-exact-opt} for detailed explanation).

\subsection{Separators and Minimal Separators}
A node set $S \subseteq V$ is called an $a$-$b$ separator if $a, b \in V$ are not connected in $G \backslash S$. An $a$-$b$ separator $S$ is called a minimal $a$-$b$ separator if any proper subset of $S$ is not an $a$-$b$ separator. A node set $S \subseteq V$ is called a minimal separator if $S$ is a minimal $a$-$b$ separator for some $a, b \in V$. We denote the set of all minimal separators of $G$ by $\Delta_G$.

\begin{figure}[t]
    \centering
    \includegraphics[width=0.2\textwidth]{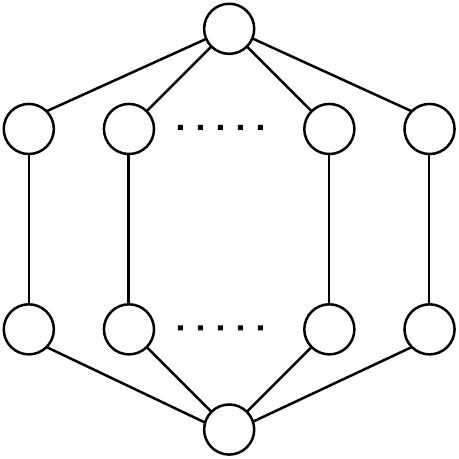}
    \caption{A graph with $tw = 2$ that has exponential number of minimal separators.}
    \label{fig:exp-sep}
\end{figure}

A graph may have an exponential number of minimal separators with repect to its node size, even when the treewidth is small (See Figure \ref{fig:exp-sep} for example). In \cite{takata-minsep} an algorithm for enumerating all separators are proposed. The running time of the algorithm is in $O(n^3 m)$ per separator. 
The algorithm is later modified in \cite{tamaki2019} to enumerate only minimal separators with bounded size. The time complexity of this modified algorithm is not given in \cite{tamaki2019}, but it is practically fast and is used in many software \cite{laakeri,tamaki2019}.

\subsection{Calculating Treedepth Using Minimal Separators} \label{naive-dp}
It is discussed in \cite{treedepth-dp-minsep} that we can compute an optimal treedepth decomposition by determining separators in a top-down way.
\begin{align}\label{eq:td-dp-ts}
    td(G) =
    \begin{cases}
    |V| & \text{if } G \text{ is complete} \\
    \min\limits_{S \in \Delta_G} \left(|S| + \max\limits_{C \in \mathcal{C}(G \backslash S)}td(G[C]) \right) & \text{otherwise}
    \end{cases}
\end{align}

\begin{figure}[t]
    \centering
    \includegraphics[width=0.6\textwidth]{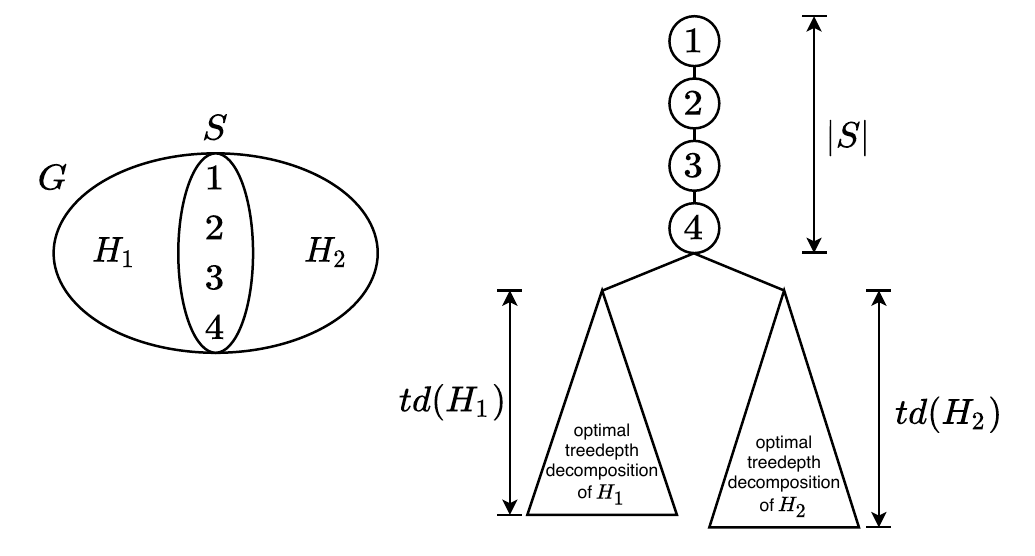}
    \caption{Optimal treedepth decomposition obtained from Equation \eqref{eq:td-dp-ts}}
    \label{fig:td-decomp-ms}
\end{figure}

To calculate an optimal treedepth decomposition from Equation \eqref{eq:td-dp-ts}, the authors begin by finding the set $\Delta_G$. Then, for each minimal separator $S \in \Delta_G$ and for each connected component $C \in \mathcal{C}(G \backslash S)$, they recursively calculate $td(G[C])$. Then, for a graph $G$ that is not complete, we obtain a separator $S^* \in \Delta_G^* := \{S \in \Delta_G \mid td(G) = |S| + \max\limits_{C \in \mathcal{C}(G \backslash S)}td(G[C])\}$. An optimal treedepth decomposition obtained from the algorithm is a tree which:
\begin{enumerate}
    \item the top of the tree is a simple path consisting of all nodes in $S^*$;
    \item the bottom end of the simple path have several branches, each of the branches is connected to the root of an optimal treedepth decomposition for $C \in \mathcal{C}(G \backslash S)$, which can be computed recursively by the same algorithm.
\end{enumerate}
We illustrate the above algorithm in Figure \ref{fig:td-decomp-ms}.

\subsection{Top Separator}
Since we are interested in the size of $S$ that appears in Equation \eqref{eq:td-dp-ts}, we define the top separator for treedepth decompositions.

\begin{Definition}
Let $T$ be a treedepth decomposition. For convenience, if $T$ is a path, we define the top separator of $T$, denoted as $ts(T)$ as $ts(T) := V(T)$. Otherwise, let $depth(T, v)$ denote the depth of $v$ in $T$ and let $p$ be the vertex that has more than one children and has smallest depth. Then, the top separator of $T$ is defined as $ts(T) := \{v \mid depth(T, v) \le depth(T, p)\}$, i.e., $ts(T)$ is the nodes that lie on the path between the root of $T$ and $p$, inclusive.
\end{Definition}

The following proposition is straightforward from Equation \eqref{eq:td-dp-ts}.
\begin{proposition}
There exists a treedepth decomposition whose top separator is a minimal separator.
\end{proposition}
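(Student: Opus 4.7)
The plan is to argue the proposition essentially by explicit construction, mirroring the recursive decomposition described just before the proposition (the one illustrated in Figure \ref{fig:td-decomp-ms}). Assume $G$ is not complete, so $\Delta_G$ is nonempty. Pick any $S \in \Delta_G$ (for the stronger ``optimal'' version, pick an $S \in \Delta_G^*$ achieving the minimum in Equation \eqref{eq:td-dp-ts}, but this is not needed for mere existence of such a treedepth decomposition).

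Next, I would build the tree $T$ as follows. Order the vertices of $S$ arbitrarily as $s_1, \dots, s_{|S|}$ and let the top of $T$ be the path $s_1 \to s_2 \to \dots \to s_{|S|}$ with $s_1$ the root. For each connected component $C \in \mathcal{C}(G \backslash S)$, take any treedepth decomposition $T_C$ of $G[C]$ and attach its root as a child of $s_{|S|}$. Then $V(T) = V(G)$ by construction, so I only need to verify the ancestor-descendant condition for every edge of $G$. Edges inside some $G[C]$ are handled by $T_C$; edges with both endpoints in $S$ are handled by the path; edges between $S$ and some $C$ are fine because every vertex of $S$ is an ancestor of every vertex of every $T_C$; and there are no edges between two different components of $G \backslash S$ by the definition of connected component. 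Hence $T$ is a valid treedepth decomposition.

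It then remains to identify $ts(T)$ with $S$. Because $S$ is a minimal separator, it is in particular a separator, so $G \backslash S$ has at least two connected components, and therefore $s_{|S|}$ has at least two children in $T$ (one per component). No vertex strictly above $s_{|S|}$ on the path has more than one child in $T$, so $s_{|S|}$ is exactly the vertex $p$ in the definition of $ts$. Consequently $ts(T) = \{v : depth(T,v) \le depth(T, s_{|S|})\} = S$, which is a minimal separator.

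I do not expect any serious obstacle here; the only subtlety is being careful with the degenerate case where $|S| = 0$ (i.e.\ $G$ is disconnected), but the paper explicitly works under the assumption that $G$ is connected, and in any case one can simply pick $p$ to be the root of $T$ in that situation. The content of the proposition is essentially just a repackaging of the recursive construction behind Equation \eqref{eq:td-dp-ts}, which is why the authors call it straightforward.
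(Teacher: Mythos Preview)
Your construction is correct and is exactly the construction the paper outlines in Section~\ref{naive-dp} and Figure~\ref{fig:td-decomp-ms}; the paper itself does not give a detailed proof, declaring the proposition ``straightforward from Equation~\eqref{eq:td-dp-ts}.'' Your verification that $ts(T)=S$ (via the fact that $s_{|S|}$ has at least two children because $S$ is a separator) fills in precisely the detail the paper leaves implicit.
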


If $T$ is an optimal treedepth decomposition, $ts(T)$ is called an optimal top separator.
\section{Main Theorems}\label{sec:main-theorems}
In this section, we give two theoretical results on the size of optimal top separators.

\subsection{Proof of Theorem \ref{thm:main-1}}
First we begin with Theorem \ref{thm:main-1}, which states for any graph $G$, there is an optimal treedepth decomposition $T$ such that $ts(T) \le 2tw(G)$.

We start with the following lemma which states the existence of a balanced separator.
\begin{lemma}[Lemma 7.19 of \cite{parameterized-algorithms}]\label{lem:subset-sep}
Let $G$ be a graph and let $U \subseteq V$. Then, there exists a separator $S$ of $G$, such that
\begin{itemize}
    \item $|S| \le tw + 1$.
    \item $G[U \backslash S]$ has more than one connected components.
    \item The size of each connected component in $G[U \backslash S]$ is at most $|U|/2$.
\end{itemize}
\end{lemma}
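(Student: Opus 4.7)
The plan is to invoke the standard tree-decomposition based balanced-separator argument, taking $S$ to be a single well-chosen bag of an optimal tree decomposition of $G$. Fix $(\mathcal{T},f)$ with $b(\mathcal{T},f) \le tw+1$ and root $\mathcal{T}$ at an arbitrary node; for each $\tau \in V(\mathcal{T})$, write $V_\tau$ for the union of all bags in the subtree rooted at $\tau$.

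I would then perform a greedy descent: at each visited node $\tau$, if some child $\tau'$ satisfies $|(V_{\tau'} \setminus f(\tau)) \cap U| > |U|/2$, move to that child; otherwise stop. The property of tree decompositions that the bags containing any fixed vertex form a connected subtree implies that the sets $V_{\tau'} \setminus f(\tau)$, as $\tau'$ ranges over the children of $\tau$, are pairwise disjoint. So a ``heavy'' child is unique when it exists and the walk terminates at a well-defined node $\tau^*$. I would then set $S := f(\tau^*)$, which immediately gives $|S| \le tw+1$.

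Verifying the balance property reduces, again via the same connectivity property of tree decompositions, to checking that each connected component of $G \setminus S$ lies inside exactly one of the ``regions'' around $\tau^*$: either a child region $V_{\tau'} \setminus f(\tau^*)$ for some child $\tau'$ of $\tau^*$, or the upward region $V(G) \setminus V_{\tau^*}$. The stopping rule bounds the $U$-weight of each child region by $|U|/2$, while the reason the walk descended from the parent of $\tau^*$ (namely $|(V_{\tau^*} \setminus f(\mathrm{parent}(\tau^*))) \cap U| > |U|/2$) forces the upward region to contain strictly fewer than $|U|/2$ elements of $U$.

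The main obstacle I anticipate is the second bullet, that $G[U \setminus S]$ has more than one connected component, which is a global requirement not guaranteed by the purely local descent. My plan is to argue by contradiction: if $U\setminus S$ were contained in a single region around $\tau^*$, then that region would contain all of $U \setminus S$, and provided $|U \setminus S| > |U|/2$ this would contradict the bounds established in the previous step. A small amount of additional bookkeeping would be needed for the degenerate case where $U \cap S$ is unusually large (so that $|U \setminus S| \le |U|/2$), which I expect the intended hypotheses of the lemma to rule out, or which can be addressed by slightly perturbing the choice of $\tau^*$ along the descent path.
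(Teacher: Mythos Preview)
The paper does not supply its own proof of this lemma: it is quoted verbatim as Lemma~7.19 of the cited textbook, and is used as a black box in the proof of Theorem~\ref{thm:main-1}. Your greedy-descent argument on a rooted optimal tree decomposition is precisely the standard proof given in that reference, so in that sense your approach coincides with the intended one.

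Your reservation about the second bullet is well placed rather than a gap in your reasoning. The textbook statement only asserts that every component of $G\setminus S$ meets $U$ in at most $|U|/2$ vertices; it does not claim that $U\setminus S$ (or $G\setminus S$) is disconnected, and indeed this can fail when $|U|\le 2(tw+1)$. The paper's phrasing is slightly stronger than what is literally cited, but harmless in context: in the only application (proof of Theorem~\ref{thm:main-1}) one takes $U$ to be a separator of size $>2\,tw(G)$, so $|U\cap S|\le tw+1<|U|/2$, whence $|U\setminus S|>|U|/2$ and your contradiction argument for disconnection goes through. No ``perturbation of $\tau^*$'' is needed; the extra bullet should simply be read under the standing hypothesis $|U|>2\,tw(G)+1$ that the application provides.
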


We are now ready to prove Theorem \ref{thm:main-1}.
\begin{proof}[Proof of Theorem \ref{thm:main-1}]
For any separator $S \in \Delta_G$, let $h(S) := |S| + \max\limits_{C \in \mathcal{C}(G \backslash S)}td(G[C])$, i.e., $h(S)$ is the small height among all treedepth decompositions that have $S$ as the top separator. Consider a separator $S$ with size larger than $2tw(G)$. 
By Lemma \ref{lem:subset-sep}, there is a separator $S'$ of $G$ with a size no larger than $tw(G) + 1$, such that each connected subgraph $G'$ of $G \backslash S'$ has $|V(G') \cap S| \le |S| / 2$.
To prove this theorem, we will show that $h(S') \le h(S)$, and, hence, there is always an optimal top separator with size no larger than $2tw(G)$.

\begin{figure}[t]
    \centering
    \includegraphics[width=0.8\textwidth]{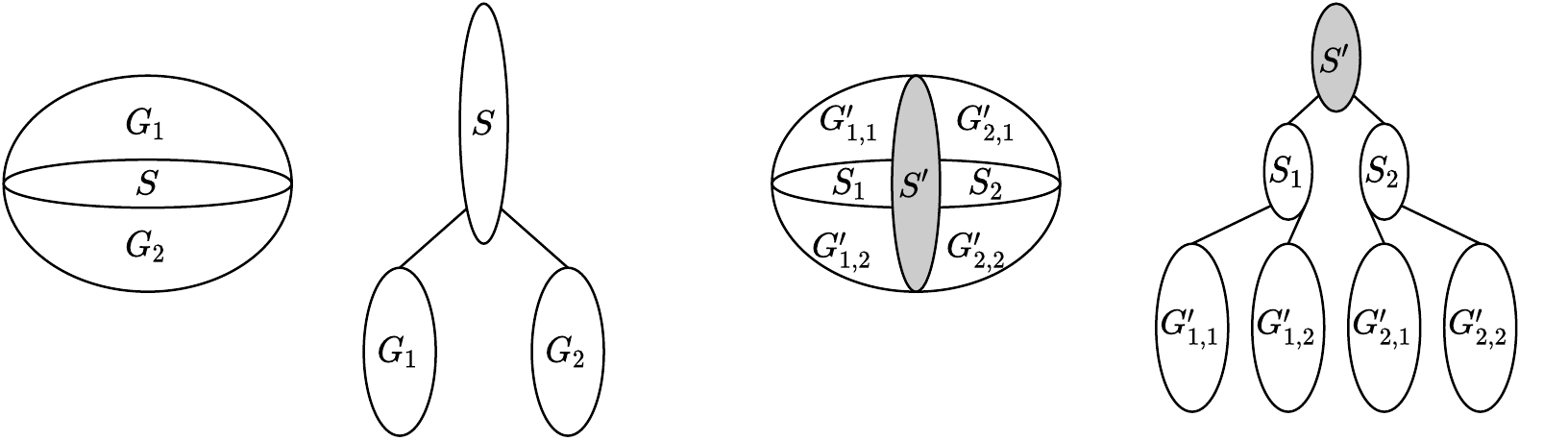}
    \caption{A rough sketch of a better decomposition using separator $S'$}
    \label{fig:proof-thm1}
\end{figure}

We consider the following treedepth decomposition $T$ (See Figure \ref{fig:proof-thm1} for inexact but helpful illustration).

\begin{itemize}
    \item The top of $T$ is a path of nodes in $S'$.

    \item The bottom node of the path has $k := |\mathcal{C}(G \backslash S')|$ branches. We denote the connected subgraphs in $\mathcal{C}(G \backslash S')$ by $G'_1, \dots G'_{k}$. A tree rooted at branch $i \in \{1, \ldots, k\}$, denoted by $T_i$, will soon be a treedepth decomposition of $G'_i$.  
    
    \item If $V(G'_i) \cap S' = \emptyset$, $T_i$ is an arbitrary optimal treedepth decomposition of $G'_i$.
    
    \item Otherwise, the top of $T_i$ is a path of nodes in $V(G'_i) \cap S$. The bottom node of the path have $|\mathcal{C}(G'_i \backslash S)|$ branches. Each of the branches is an arbitrary optimal treedepth decomposition of each connected subgraph in $\mathcal{C}(G'_i \backslash S)$.
\end{itemize}

It is straightforward to check that $T$ is a valid treedepth decomposition of $G$. Indeed, $T_i$ is actually a treedepth decomposition of $G'_i$, and, for $i \neq j$, $V(T_i) \cap V(T_j) = \emptyset$. 

The height of $T$ is
\begin{align*}
height(T) & = |S'| + \max \limits_i height(T_i)\\
& = |S'| + \max\limits_i \left[ |V(G'_i) \cap S| + \max\limits_{C \in\mathcal{C}(G'_i \backslash S)} td(G'_i[C]) \right]\\
& \le |S'| + \max\limits_i \left[ \frac{|S|}{2} + \max\limits_{C \in\mathcal{C}(G'_i \backslash S)} td(G'_i[C]) \right]\\
& \le |S'| + \frac{|S|}{2} + \max\limits_i \left[  \max\limits_{C \in\mathcal{C}(G'_i \backslash S)} td(G'_i[C]) \right]\\
& \le |S'| + \frac{|S|}{2} + \max\limits_{C \in \mathcal{C}(G \backslash S)} td(G[C]).
\end{align*}
Since $|S| \ge 2tw(G) + 1$, $|S'| \le tw(G) + 1$, we have $|S'| \le (|S| + 1)/2$. Moreover, since $|S|$ is an integer, we have
\begin{align*}
height(T) &\le (|S| + 1)/2 + |S|/2 + \max\limits_{C \in \mathcal{C}(G \backslash S)} td(G[C])\\
&= |S| + td(G \backslash S)\\
&= h(S).
\end{align*}

Therefore, we have $h(S') \le h(T) \le h(S)$ and complete the proof.
\end{proof}

\subsection{Proof of Theorem \ref{thm:main-2}}
Next, we show Theorem \ref{thm:main-2}, which states that for any $c < 2$, there exists a graph $G$ such that for any optimal treedepth decomposition $T$, $|ts(T)| > c \cdot tw(G)$, i.e., Theorem \ref{thm:main-1} gives a tight bound for the size of an optimal top separator.

The proof is by construction. We illustrate a rough sketch of the construction in Figure \ref{fig:sketch}.
\begin{figure}[t]
    \centering
    \includegraphics[height=6.5cm]{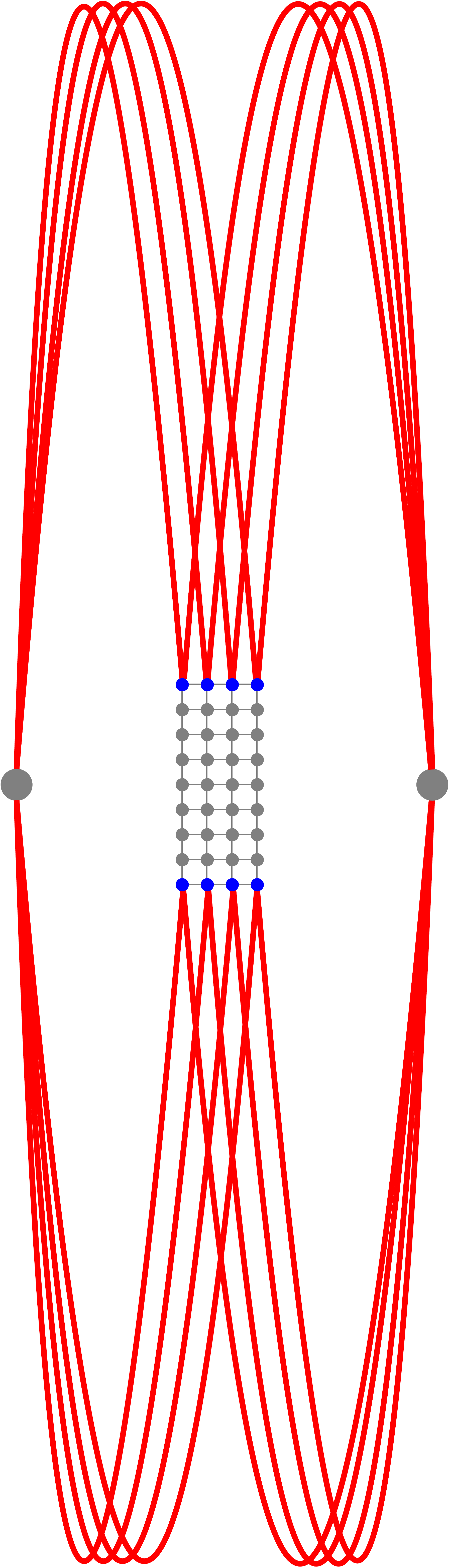}
    \caption{Sketch of the construction. Roughly, the graph is constructed by adding long paths of a certain length (red parts) to two ends of a grid. The unique optimal top separator is the blue nodes, whose size is almost twice the treewidth.}
    \label{fig:sketch}
\end{figure}

In order to analyze the treedepth and treewidth of the constructed graph, we start to define some graphs which appear as important subgraphs in our final construction.

\begin{Definition}\label{def:broom}
For positive integers $n, m, k$, we define a broom $B_{n, m, k}$ as follows (see Figure \ref{fig:broom}).
\begin{enumerate}
    \item Prepare a $P_m \times P_n$ grid such that vertices are $V_G(B_{n, m, k}) := \{(i, j) \mid 1 \le i \le n \text{ and } 1 \le j \le m\}$.
    \item For each $v \in \{(i, m) \mid 1 \le i \le n\}$, prepare a $P_{2^k - 1}$ and connect $v$ with one end of it by adding a new edge. Let $V_P(B_{n, m, k})$ be the disjoint union of the vertices in these $P_{2^k - 1}$.
\end{enumerate}

\begin{figure}[t]
    \centering
    \includegraphics[height=6.5cm]{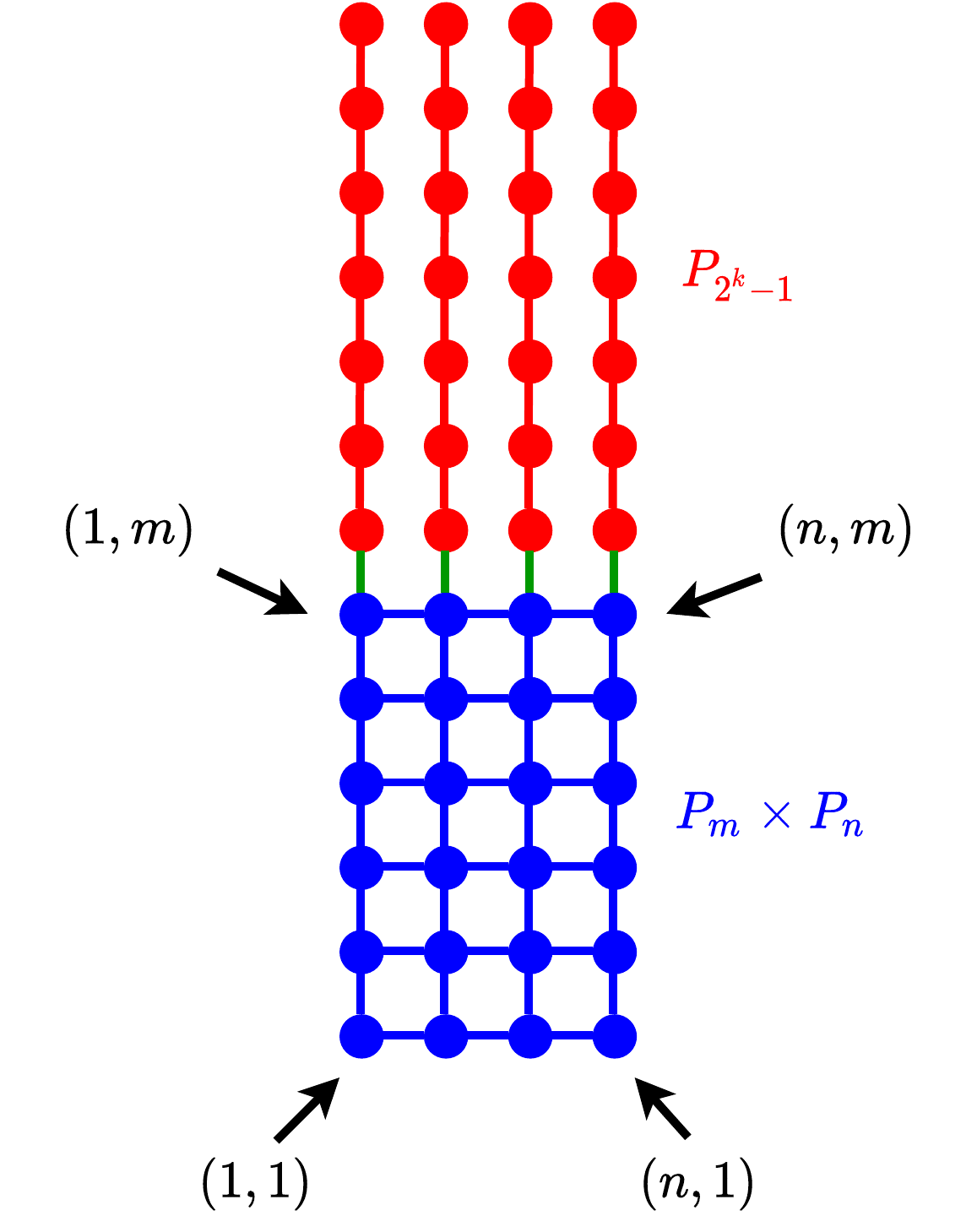}
    \caption{A broom. $V_G(B_{n, m, k})$ and $V_P(B_{n, m, k})$ are the blue vertices and the red vertices respectively.}
    \label{fig:broom}
\end{figure}
\end{Definition}

\begin{Definition}
Let $T$ be a treedepth decomposition. For $d \ge 1$, the top $d$ nodes of $T$ is defined as $top(T, d) := \{v \in V(T) \mid depth(T, v) \le d\}$, where $depth(T, v)$ is the depth of $v$ in a rooted tree $T$.
\end{Definition}

\begin{Definition}\label{def:compress}
Let $S \subseteq V(B_{n, m, k})$ such that $S \cap V_G(B_{n, m, k}) \neq \emptyset$. Define $Bottom(S) := \{(i, j) \in S \cap V_G(B_{n, m, k}) \mid j \ge j' \text{ for any } (i', j') \in S\}$. If for all $(i, j) \in Bottom(S)$, $j = m$, then define $Compress(S) := S$. Otherwise, define $Compress(S) := (S \backslash Bottom(S)) \cup \{(i, j + 1) \mid (i, j) \in Bottom(S)\}$.

\begin{figure}[t]
    \centering
    \includegraphics[height=5.5cm]{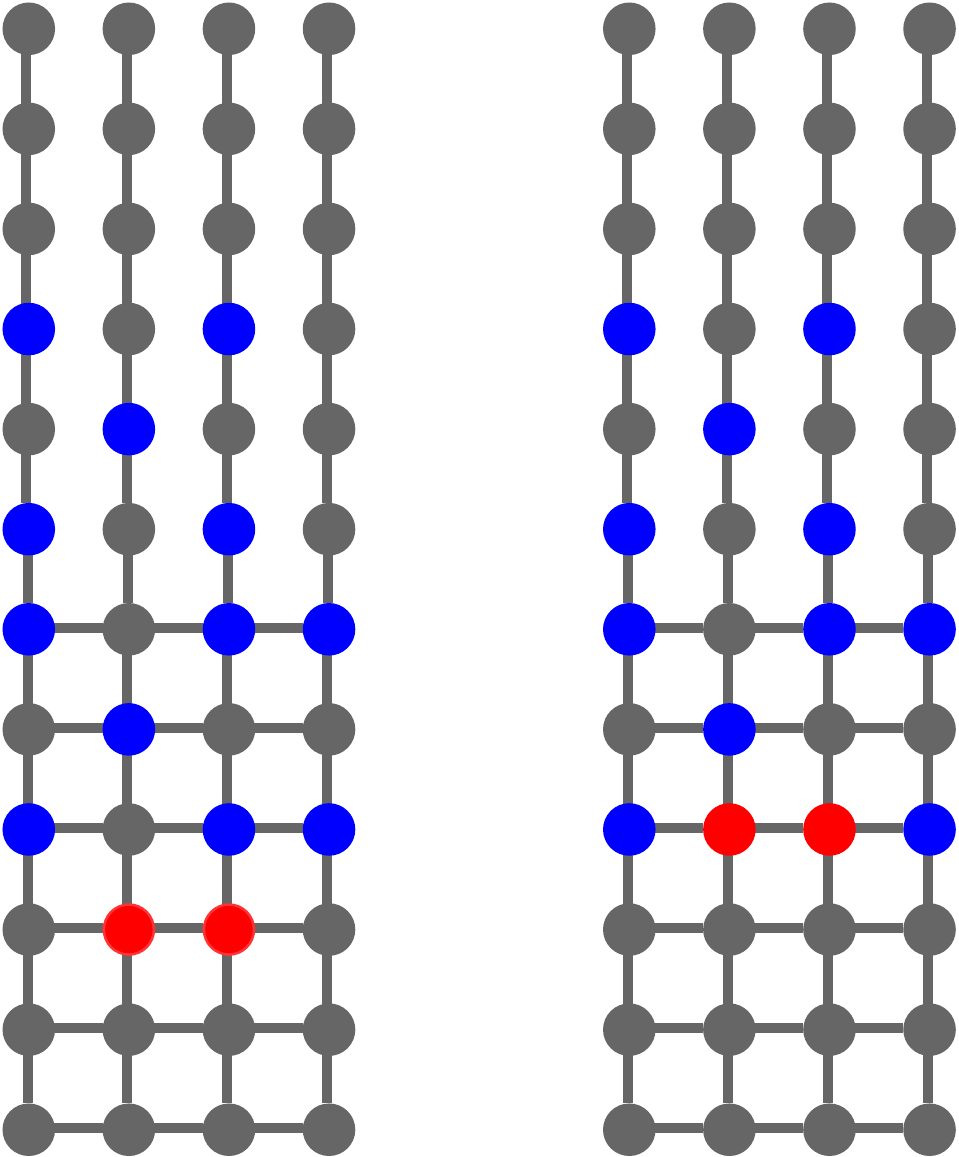}
    \caption{$S$ and $Compress(S)$ are illustrated as the colored nodes in the left and right figure, respectively. The red nodes in the left figure is $Bottom(S)$.}
    \label{fig:compress}
\end{figure}
\end{Definition}

\begin{lemma}\label{lem:broom-opt-sep}
Let $G'$ be a graph. For $k \ge 2n + td(G') + td(P_m \times P_n)$, let $G$ be a graph that is obtained by connecting $G'$ and $B_{n, m, k}$ by adding some edges between $V(G')$ and $\{(i, 1) \mid 1 \le i \le n\}$. Suppose there exists an optimal treedepth decomposition $T$ and an integer $d$ with following conditions.
\begin{itemize}
    \item There exists some $1 \le j \le m$ such that $top(T, d) \cap \{(i, j) \mid 1 \le i \le n\} = \emptyset$.
    \item For $1 \le i \le n$, let $V_i$ be the nodes that are in $\{(i, j) \mid 1 \le j \le m\}$ or in the path $P_{2^k - 1}$ that is connecting to $(i, m)$. Then, for all $1 \le i \le n$, $V_i$ contains at least one node in $top(T, d)$.
\end{itemize}
Then, $d \ge n$.
\end{lemma}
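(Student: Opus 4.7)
The plan is to exploit a well-known structural property of treedepth decompositions: for any connected subgraph $H$ of a graph $G$ and any treedepth decomposition $T$ of $G$, the vertex of $H$ of smallest depth in $T$ is an ancestor of every other vertex of $H$. (This follows by a short induction along a shortest path inside $H$: each edge forces an ancestor--descendant relation, and a strictly shallower vertex of $H$ would contradict minimality of the chosen vertex.) I would apply this fact twice, once to each column $V_i$ and once to row $j$, and then compare the resulting ancestors.

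First, for each column $V_i$ (which is a single path, hence connected), let $v_i^*$ be the vertex of $V_i$ of smallest depth in $T$. Since the hypothesis guarantees $V_i \cap top(T,d) \neq \emptyset$, we get $depth(T, v_i^*) \le d$, so $v_i^* \in top(T,d)$, and $v_i^*$ is an ancestor of every vertex of $V_i$ in $T$, in particular of $(i,j)$. Likewise, letting $R_j := \{(i,j) : 1 \le i \le n\}$ (a path in $G$) and letting $p$ be its vertex of smallest depth in $T$, the first hypothesis forces $depth(T, p) > d$, while $p$ is an ancestor of every $(i,j)$.

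The decisive step is the following comparison: for each $i$, both $v_i^*$ and $p$ are ancestors of $(i,j)$, so they lie on the unique root-to-$(i,j)$ chain of $T$ and are therefore comparable. Since $depth(T, v_i^*) \le d < depth(T, p)$, $v_i^*$ must be a strict ancestor of $p$. Hence all $n$ vertices $v_1^*, \ldots, v_n^*$ lie on the root-to-$p$ chain in $T$. The sets $V_i$ are pairwise disjoint, so the $v_i^*$'s are pairwise distinct, which forces $n$ distinct depth values in $\{1, \ldots, d\}$ along a single chain of $T$, yielding $d \ge n$.

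The main obstacle I anticipate is bookkeeping rather than conceptual: articulating the ``top vertex of a connected subgraph'' fact crisply so that the row-column comparison above reads as essentially a one-liner, without letting grid indices obscure it. Somewhat surprisingly, the hypothesis $k \ge 2n + td(G') + td(P_m \times P_n)$ and the optimality of $T$ appear unused in this particular lemma; they are presumably needed by downstream lemmas that invoke Lemma~\ref{lem:broom-opt-sep}.
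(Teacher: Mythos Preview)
Your argument is correct and considerably more direct than the paper's. The structural fact you invoke---that the minimum-depth vertex of any connected subgraph $H$ is an ancestor in $T$ of every vertex of $H$---immediately forces all $n$ column-minima $v_i^*$ onto the single root-to-$p$ chain strictly above $p$, and disjointness of the $V_i$ finishes it. This is clean and elementary.

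The paper takes a quite different route. Assuming $d<n$ for contradiction, it iteratively \emph{modifies} the optimal decomposition $T$: at each step it identifies the ``bottom'' layer of $top(T,d)$ inside the grid and pushes it one row toward $j=m$ (the $Compress$ operation), detaching and re-attaching the affected component $X$ near $G'$. The assumption $k \ge 2n + td(G') + td(P_m \times P_n)$ is used precisely here, to guarantee that re-attaching an optimal decomposition of $X$ does not increase the height (the long $P_{2^k-1}$ tails remain the bottleneck). After repeated compressions one reaches an optimal $T^*$ whose $top(T^*,d^*)$ lies entirely in row $m$ and the tails, where $d^*\ge n$ is forced directly; since $d^*\le d$, contradiction.

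So the two proofs differ in kind: the paper's is algorithmic and relies essentially on the optimality of $T$ and the lower bound on $k$, while yours is a pure ancestor-chain pigeonhole that works for \emph{any} treedepth decomposition and any $k$. Your observation that those hypotheses go unused in your argument is accurate---you have in fact proved a strictly stronger statement. The paper's compression machinery is not wasted, however: the same modification idea is reused in the proofs of Lemmas~\ref{lem:double-broom-opt-sep} and~\ref{lem:main}, where the surgical control over where $top(T,d)$ sits is what drives the uniqueness claim for the optimal top separator.
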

\begin{proof}
Since there exist some $1 \le j \le m$ such that $top(T, d) \cap \{(i, j) \mid 1 \le i \le n\} = \emptyset$, we can assume that $top(T, d)$ does not contain any node below this, i.e., $\{(i, j') \mid 1 \le i \le n \text{ and } j' \le j\}$ or $V(G')$ because otherwise we can take a smaller $d' < d$ which satisfies the second condition.

Let $X$ be the connected component in $G \backslash top(T, d)$ which contains $\{(i, 1) \mid 1 \le i \le n\}$ and $V(G')$. Note that decomposing $P_{2^k - 1}$ is still the bottleneck for treedepth decomposition because we have $td(G[X]) \le n + td(G') + td(P_m \times P_n)$ while $td(G[V_p(B_{n, m, k}) \backslash top(T, n)]) \ge k - n$.

Suppose $d < n$.
We are going to construct a new treedepth decomposition $T'$ in the following manner such that $h(T') \le h(T)$, i.e, $T'$ is optimal, and $Compress(top(T, d)) \subseteq top(T', d')$ for some $d' \le d$.

\begin{enumerate}
    \item Let $T' := T$.
    \item Remove $X$ from $T'$ (when $v$ is removed, the children of $v$ are connected to the parent of $v$).
\end{enumerate}
Note that at this step, $h(T') = h(T)$, otherwise it contradicts to the optimality of $T$.
\begin{enumerate}
    \setcounter{enumi}{2}
    \item If $v = (i, j) \in Bottom(top(T, d))$, replace $v$ with $(i, j + 1)$ in $T'$.
    \item If $T'$ contains more than one $v$, keep the one with smallest depth (this is unique) and remove others.
\end{enumerate}
Then, we have some $d' \le d$ such that $Compress(top(T, d)) \subseteq top(T', d')$. Finally, we modify $T'$ by
\begin{enumerate}
    \setcounter{enumi}{4}
    \item Appending an optimal treedepth decomposition of $X \cup Bottom(top(T, d))$ to an appropriate node at depth $d'$ of $T'$ (such node exists).
\end{enumerate}
Note that since $td(G[X \cup Bottom(top(T, d))]) \le n + td(G') + td(P_m \times P_n)$, this does not increase the height of $T'$.

By repeatedly applying this algorithms to construct $T'$ from $T$, we finally obtain an optimal treedepth decomposition $T^*$ such that $top(T^*, d^*)$ contains only $V_P(B_{n, m, k})$ or $\{(i, m) \mid 1 \le i \le n\}$ for some $d^* \le d$. However, in this case, we clearly have $d^* \ge n$ and therefore we have $d \ge n$, which is a contradiction.
\end{proof}

\begin{Definition}
For positive integers $n, m, k$, a double broom $D_{n, m, k}$ is defined as follows (see Figure \ref{fig:double-broom}).

\begin{enumerate}
    \item Prepare a $P_m \times P_n$ grid such that the vertices are $\{(i, j) \mid 1 \le i \le n \text{ and } 1 \le j \le m\}$.
    \item For each $v \in \{(i, j) \mid 1 \le i \le n \text{ and } j = 1, m\}$, prepare a $P_{2^k - 1}$ and connect $v$ with one end of it by adding a new edge.
\end{enumerate}

\begin{figure}[t]
    \centering
    \includegraphics[height=6.5cm]{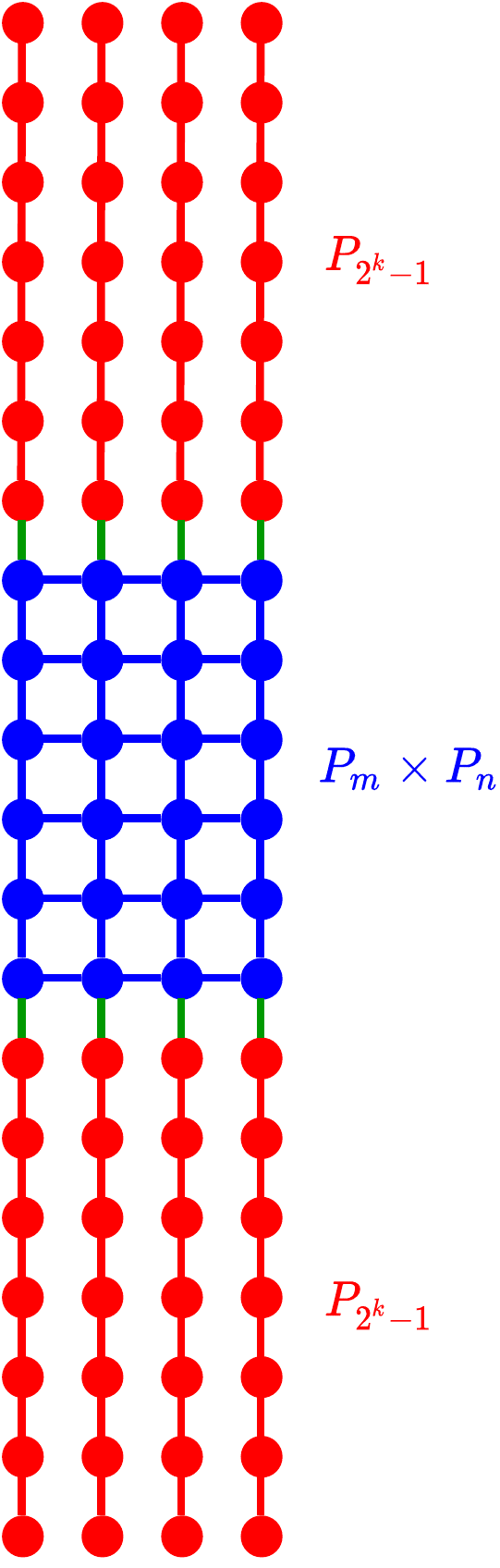}
    \caption{A double broom.}
    \label{fig:double-broom}
\end{figure}
\end{Definition}

\begin{lemma}\label{lem:double-broom-opt-sep}
Let $k \ge 2n + 2td(P_m \times P_n)$. Suppose there exists an optimal treedepth decomposition $T$ of $D_{n, m, k}$ and an integer $d$ with following conditions.
\begin{itemize}
    \item There exists some $1 \le j \le m$ such that $top(T, d) \cap \{(i, j) \mid 1 \le i \le n\} = \emptyset$.
    \item For $1 \le i \le n$, let $V_i$ be the nodes that are in $\{(i, j') \mid j' > j\}$ or in the path $P_{2^k - 1}$ that is connecting to $(i, m)$. Then, for all $1 \le i \le n$, $V_i$ contains at least one node in $top(T, d)$.
    \item For $i \le i \le n$, let $U_i$ be the nodes that are in $\{(i, j') \mid j' < j\}$ or in the path $P_{2^k - 1}$ that is connecting to $(i, 1)$. Then, for all $1 \le i \le n$, $U_i$ contains at least one node in $top(T, d)$.
\end{itemize}
Then, $d \ge 2n$. 
\end{lemma}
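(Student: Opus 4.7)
My plan is to derive a contradiction by assuming $d < 2n$ and then adapting the compression argument of Lemma \ref{lem:broom-opt-sep} to the two-sided setting of the double broom. The empty row $j$ divides $D_{n,m,k}$ into an upper half (grid rows $j+1, \ldots, m$ together with the $n$ upper paths $P_{2^k-1}$) and a lower half (grid rows $1, \ldots, j-1$ together with the $n$ lower paths), and it serves as a common pivot that allows the top separator to be compressed toward either set of endpoints.

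First I would run the upper-half compression. Exactly as in the algorithm in the proof of Lemma \ref{lem:broom-opt-sep}, repeatedly identify the deepest row of $top(T, d)$ lying above row $j$, excise the component containing everything strictly below it (including row $j$ and the lower half), push each bottom element $(i, j')$ to $(i, j'+1)$, and re-attach an optimal decomposition of the excised component together with $Bottom(top(T, d))$. The hypothesis $k \ge 2n + 2td(P_m \times P_n)$ plays the same role here as $k \ge 2n + td(G') + td(P_m \times P_n)$ does in Lemma \ref{lem:broom-opt-sep}: it forces the $P_{2^k - 1}$ subtrees to remain the height bottleneck so the replacement never increases the height. I would then symmetrically compress the lower half (with row indices reflected so that row $1$ plays the role of row $m$). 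The output is an optimal $T^*$ in which every node of $top(T^*, d^*)$ lies in one of $2n$ disjoint ``terminal arms'' -- one per upper path $P_{2^k-1}$ together with its grid endpoint $(i, m)$, and one per lower path together with $(i, 1)$ -- for some $d^* \le d$, and each terminal arm still contains at least one node of $top(T^*, d^*)$ because the column-hitting invariant is preserved by the compression.

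The final step is a counting lemma that I would prove by induction on the number of arms $N$: any rooted treedepth decomposition of a grid with $N$ disjoint paths attached at single grid vertices, whose top $d^*$ levels lie inside the paths together with their attachment vertices and hit every such path, must satisfy $d^* \ge N$. The root $r_0$ of such a decomposition must lie in some arm $a_0$; removing $r_0$ leaves one big component containing the grid and the remaining $N - 1$ arms, and the subtree rooted at that big component is itself a decomposition satisfying the hypothesis with $N - 1$ arms, so induction gives it depth $\ge N - 1$ and the whole tree depth $\ge N$. Taking $N = 2n$ yields $d^* \ge 2n$, contradicting $d^* \le d < 2n$.

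The main obstacle will be optimality-preservation in the compression step: invoking Lemma \ref{lem:broom-opt-sep} as a black box fails because the natural $G'$ (the opposite half) contains its own $P_{2^k-1}$, so $td(G') \ge k$ and the single-broom $k$-bound is violated. I plan to open up that lemma's height accounting and observe that during the upper compression the lower half's decomposition can be kept intact -- only the upper portion is actually rewritten, and the re-attached subtree has height bounded by $k + n + td(P_m \times P_n)$, which fits under $h(T)$ thanks to the doubled slack $2td(P_m \times P_n)$ in the hypothesis. An analogous accounting handles the lower compression, which is what produces the factor of $2$ appearing on both sides of the statement.
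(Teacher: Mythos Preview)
Your route differs from the paper's. The paper does not rerun the compression on both halves; it works at the level of depth increments (``steps''). It asserts that as long as neither the $V$-hitting nor the $U$-hitting condition has been met, a single step cannot contain nodes from both $V=\bigcup_i V_i$ and $U=\bigcup_i U_i$, so without loss of generality the minimum $d$ is realised by a decomposition that first completes the $V$-condition using only $V$-nodes and then the $U$-condition. Lemma~\ref{lem:broom-opt-sep} is then invoked once per phase to get $n+n=2n$. This is a three-line reduction to the single-broom lemma, whereas you unpack that lemma's machinery on each side and close with a separate arm-counting argument.

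There is, however, a genuine gap in your plan at the point where you claim ``the column-hitting invariant is preserved by the compression''. In the algorithm of Lemma~\ref{lem:broom-opt-sep}, step~2 contracts out the \emph{entire} component $X$ lying below the current bottom row, and step~5 re-attaches an arbitrary optimal decomposition of $X$ at depth greater than $d'$. In the double broom this $X$ contains column~$j$ and the whole lower half, including every node of $top(T,d)\cap U$; after the re-attachment those nodes sit strictly below depth $d'$, so $top(T',d')$ no longer meets any $U_i$ and the subsequent lower compression has nothing to work with. Your fix, ``keep the lower half's decomposition intact'', is incompatible with step~2 (the contraction leaves no lower-half subtree to keep), and the bound $k+n+td(P_m\times P_n)$ you quote is precisely the height of an \emph{optimal} lower-half decomposition---the very re-attachment that destroys the invariant. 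A clean repair is to run a single two-sided $Compress$: simultaneously push the lowest upper-grid row of $top(T,d)$ toward row~$m$ and the highest lower-grid row toward row~$1$, so that the excised middle band lies entirely inside the $P_m\times P_n$ grid; then its treedepth is at most $td(P_m\times P_n)$, neither family of path-hits is touched, and your arm-counting over the $2n$ disjoint terminal arms finishes the proof.
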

\begin{proof}
Let $V = \bigcup V_i$ and $U = \bigcup U_i$.
We call $top(T, 1)$ the nodes that are taken at step $1$ and for $t > 1$, We call $top(T, t) \backslash top(T, t - 1)$ the nodes that are taken at step $t$.

When both the second and the third conditions are unsatisfied, we cannot take nodes from both $V$ and $U$ in one step. Therefore, one of the smallest $d$ is obtained by a scenario that we first satisfy the second step without taking any node from $U$ and then satisfy the third condition. By Lemma \ref{lem:broom-opt-sep}, we need at least $n + n = 2n$ steps and $d \ge 2n$.
\end{proof}

\begin{lemma}\label{lem:grid-vert-sep}
Let $T$ be a treedepth decomposition of $P_{2^k - 1} \times P_n$ grid whose vertices are $\{(i, j) \mid 1 \le i \le n \text{ and } 1 \le j \le 2^k - 1\}$. Suppose that there exists a positive integer $d$ and for any $1 \le j \le 2^k - 1$, there exists some $i$ such that $(i, j) \in top(T, d)$. Then, $d \ge k$.
\end{lemma}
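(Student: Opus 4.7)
The plan is to induct on $k$. For the base case $k = 1$, the grid $P_1 \times P_n$ has a single column, so the hypothesis that $top(T,d)$ meets every column forces $top(T,d) \neq \emptyset$, whence $d \ge 1 = k$.

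For the inductive step, assume the claim for $k-1$ and every $n$. Let $T$ be a treedepth decomposition of $P_{2^k - 1} \times P_n$ whose $top(T,d)$ meets every column, and let $r$ be the root of $T$, lying in some column $c$. The $2^k - 2$ columns different from $c$ split into two sides of sizes $c - 1$ and $2^k - 1 - c$, and one of these sides has size at least $\lceil (2^k - 2)/2 \rceil = 2^{k-1} - 1$. Without loss of generality the larger side lies to the right of $c$; pick $2^{k-1} - 1$ consecutive columns on that side and let $G'$ denote the induced subgrid on those columns, which is isomorphic to $P_{2^{k-1} - 1} \times P_n$. Define $T'$ to be the restriction of $T$ to $V(G')$: the $T'$-parent of any $v \in V(G')$ is the deepest $T$-ancestor of $v$ that also lies in $V(G')$. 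Since ancestor--descendant pairs in $T$ restrict to such pairs in $T'$, $T'$ is a valid treedepth decomposition of the induced subgraph $G'$.

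For each column $j$ of $G'$, the hypothesis provides some $v_j \in top(T,d)$ in column $j$; since $j \ne c$ we have $v_j \ne r$ and $v_j \in V(G')$. The key quantitative step is that $depth(T', v_j) \le depth(T, v_j) - 1$, because the $T$-path from $r$ down to $v_j$ always starts at $r$, and $r \notin V(G')$ is therefore omitted when the chain is contracted to $V(G')$. Consequently $depth(T', v_j) \le d - 1$, so $top(T', d - 1)$ meets every column of $G'$. The induction hypothesis applied to $T'$ and $G'$ then yields $d - 1 \ge k - 1$, i.e., $d \ge k$.

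The main obstacle I expect is justifying the depth-drop cleanly: one has to check both that the restriction operation produces a valid treedepth decomposition of the induced subgrid and that every vertex of $G'$ strictly loses at least one ancestor in $T'$ because $r$ is discarded. Once these are in place, the remainder is just arithmetic on column counts to guarantee the chosen subgrid is wide enough to host the inductive hypothesis.
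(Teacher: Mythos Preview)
Your proof is correct and follows essentially the same strategy as the paper's: induct on $k$, remove the root, locate a $P_{2^{k-1}-1}\times P_n$ subgrid among the columns on the heavier side of the root's column, restrict $T$ to that subgrid, and apply the induction hypothesis after observing that every surviving depth drops by at least one because the root is discarded. The paper phrases the step in terms of removing the whole top separator $ts(T)$ of size $s$, but then handles $s\ge 2$ by iterating the $s=1$ case, which is precisely your one-vertex-at-a-time argument; your write-up is in fact the more explicit of the two, since you spell out the restriction $T'$ and the depth-drop inequality.
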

\begin{proof}
The proof is by mathematical induction on $k$. When $k = 1$, this is trivial. Let $s$ be the size of the top separator of $T$, i.e., $s = |ts(T)|$. We can assume that $s \le k$. Then, at least one component in $(P_{2^k - 1} \times P_n) \backslash ts(T)$ contains a grid $P_{(2^k - 1 - s) / 2} \times P_n$. In order to have $d \ge k$, we should prove that $d' \ge k - s$, where $d'$ is the $d$ defined for the smaller grid. For $s = 1$, the size of the smaller grid is $P_{2^{k - 1} - 1} \times P_n$ and by induction hypothesis, we have $d' \ge k - 1$. For $s \ge 2$, since $\frac{2^{k - s} - 1}{2} \le \frac{2^k - 1 - s}{2}$, it can be shown by applying the previous case $s$ times.
\end{proof}

\begin{lemma}\label{lem:double-broom-exact-td}
For $m \ge 2^{2n + 1} - 1$ and $k \ge 2n + 2td(P_m \times P_n)$, $td(D_{n, m, k}) = 2n + k$.
\end{lemma}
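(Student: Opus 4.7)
The plan is to bound $td(D_{n,m,k})$ from above by $2n+k$ via an explicit construction and from below by $2n+k$ via a case analysis that combines Lemmas \ref{lem:double-broom-opt-sep} and \ref{lem:grid-vert-sep}. For the upper bound, I would place the $2n$ ``anchor'' grid vertices $\{(i,1),(i,m) : 1\le i\le n\}$ on a single path of depth $2n$ at the top of the decomposition. Removing them disconnects $D_{n,m,k}$ into the inner grid $P_{m-2}\times P_n$ together with the $2n$ pendant paths $P_{2^k-1}$; each pendant has treedepth exactly $k$, and the inner grid has treedepth at most $td(P_m\times P_n)$, which is $\le k$ by the assumption $k\ge 2n+2\,td(P_m\times P_n)$. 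Hanging optimal treedepth decompositions of these pieces below the spine yields total height $2n+k$.

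For the lower bound, I would fix an optimal decomposition $T$ of height $h=td(D_{n,m,k})$ and set $d:=h-k$; showing $d\ge 2n$ is enough. The standard fact that every connected component of $D_{n,m,k}\setminus top(T,d)$ has treedepth at most $h-d=k$ immediately yields the key intermediate observation: for every $j\in\{2,\dots,m-1\}$ and every $i$, the sets $V_i$ and $U_i$ of Lemma \ref{lem:double-broom-opt-sep} must each contain a vertex of $top(T,d)$. Indeed, each of them induces a path of length at least $2^k$ (a pendant $P_{2^k-1}$ plus at least one additional grid vertex), so its induced treedepth is $\ge k+1$; were it disjoint from $top(T,d)$, it would sit inside a single component of treedepth $>k$, a contradiction.

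With this observation in hand, I split into two cases. If some column $j\in\{2,\dots,m-1\}$ is missed by $top(T,d)$, then all three hypotheses of Lemma \ref{lem:double-broom-opt-sep} are satisfied for this $j$, and the lemma gives $d\ge 2n$ directly. Otherwise every column in $\{2,\dots,m-1\}$ meets $top(T,d)$; since $m\ge 2^{2n+1}-1\ge 2^{2n}+1$, the subgraph on columns $\{2,\dots,2^{2n}\}$ is an isomorphic copy of $P_{2^{2n}-1}\times P_n$, and $T$ induces on it a treedepth decomposition $T'$ (the parent of a sub-grid vertex is taken to be its nearest sub-grid ancestor in $T$) satisfying $depth_{T'}\le depth_T$. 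Every column of this sub-grid therefore meets $top(T',d)$, and Lemma \ref{lem:grid-vert-sep} with parameter $2n$ concludes $d\ge 2n$.

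The step I expect to be the main obstacle is this last one: turning $T$ into a legitimate decomposition of the sub-grid and verifying that column coverage transfers from $top(T,d)$ to $top(T',d)$. Once that transfer is made precise, the remainder is routine, consisting of the length-versus-$2^k$ comparisons that justify the key observation (using $td(P_\ell)=\lceil\log_2(\ell+1)\rceil$) and clean invocations of the two preceding lemmas.
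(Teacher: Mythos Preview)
Your proposal is correct and follows essentially the same route as the paper: the same upper-bound construction (the $2n$ anchor vertices $\{(i,1),(i,m)\}$ on top, pendants below), and the same two lemmas for the lower bound. The only difference is organizational: the paper fixes $d=2n$, uses Lemma~\ref{lem:grid-vert-sep} (with $m\ge 2^{2n+1}-1$) to guarantee a missed column, argues from the upper bound that every $P_i,Q_i$ is hit by $top(T,2n)$, and then finishes via Lemma~\ref{lem:double-broom-opt-sep} together with a pigeonhole observation that some $V_i$ or $U_i$ receives only one node of $top(T,2n)$, leaving a $P_{2^{k-1}}$; you instead set $d=h-k$ and case-split, invoking Lemma~\ref{lem:double-broom-opt-sep} directly when an interior column is missed and Lemma~\ref{lem:grid-vert-sep} on a $P_{2^{2n}-1}\times P_n$ sub-grid otherwise, which sidesteps that extra pigeonhole step.
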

\begin{proof}
Considering a treedepth decomposition whose top separator is $\{(i, j) \mid 1 \le i \le n \text{ and } j = 1, m\}$ and then decomposing each $P_{2^k - 1}$, we obtain $td(D_{n, m, k}) \le 2n + k$.

Let $T$ be an optimal treedepth decomposition of $D_{n, m, k}$. By Lemma \ref{lem:grid-vert-sep}, there exists some $1 \le j \le m$ such that $top(T, 2n) \cap \{(i, j) \mid 1 \le i \le n\} = \emptyset$.

Let $P_i$ be the path $P_{2^k}$ which consists of $(i, m)$ and the $P_{2^k - 1}$ connecting to it. Also, let $Q_i$ be the path $P_{2^k}$ which consists of $(i, 1)$ and the $P_{2^k - 1}$ connecting to it. Then, We can assume that for all $1 \le i \le n$, $top(T, 2n)$ contains at least one node of both $P_i$ and $Q_i$, because $td(P_{2^k}) = k + 1$ and otherwise $td(D_{n, m, k}) \ge 2n + k + 1$.

By Lemma \ref{lem:double-broom-opt-sep}, we need at least $2n$ steps for this and at least for one $i$, $V_i$ or $U_i$ contains only one node in $top(T, 2n)$. Then, we have $P_{2^{k-1}}$ somewhere in $D_{n, m, k} \backslash top(T, 2n)$. Since $td(P_{2^{k - 1}}) = k$, we have $td(D_{n, m, k}) \ge 2n + k$.

Therefore, we have $td(D_{n, m, k}) = 2n + k$.
\end{proof}

\begin{Definition}
Let $G_{n, m, k, l}$ be the graph that is constructed as follows (see Figure \ref{fig:corner}).
\begin{enumerate}
    \item Prepare a $P_m \times P_n$ grid such that the vertices are $V_G(G_{n, m, k, l}) := \{(i, j) \mid 1 \le i \le n \text{ and } 1 \le j \le m\}$.
    \item Prepare $l$ new vertices $W$.
    \item For all $\{(i, j) \mid 1 \le i \le n \text{ and } j = 1, m\}$ and for all $v \in W$, prepare a $P_{2^k - 1}$. Connect one end of this path to $(i, j)$ by adding an edge, and connect the other end to $v$ by adding an edge.
\end{enumerate}

\begin{figure}[t]
    \centering
    \includegraphics[height=7cm]{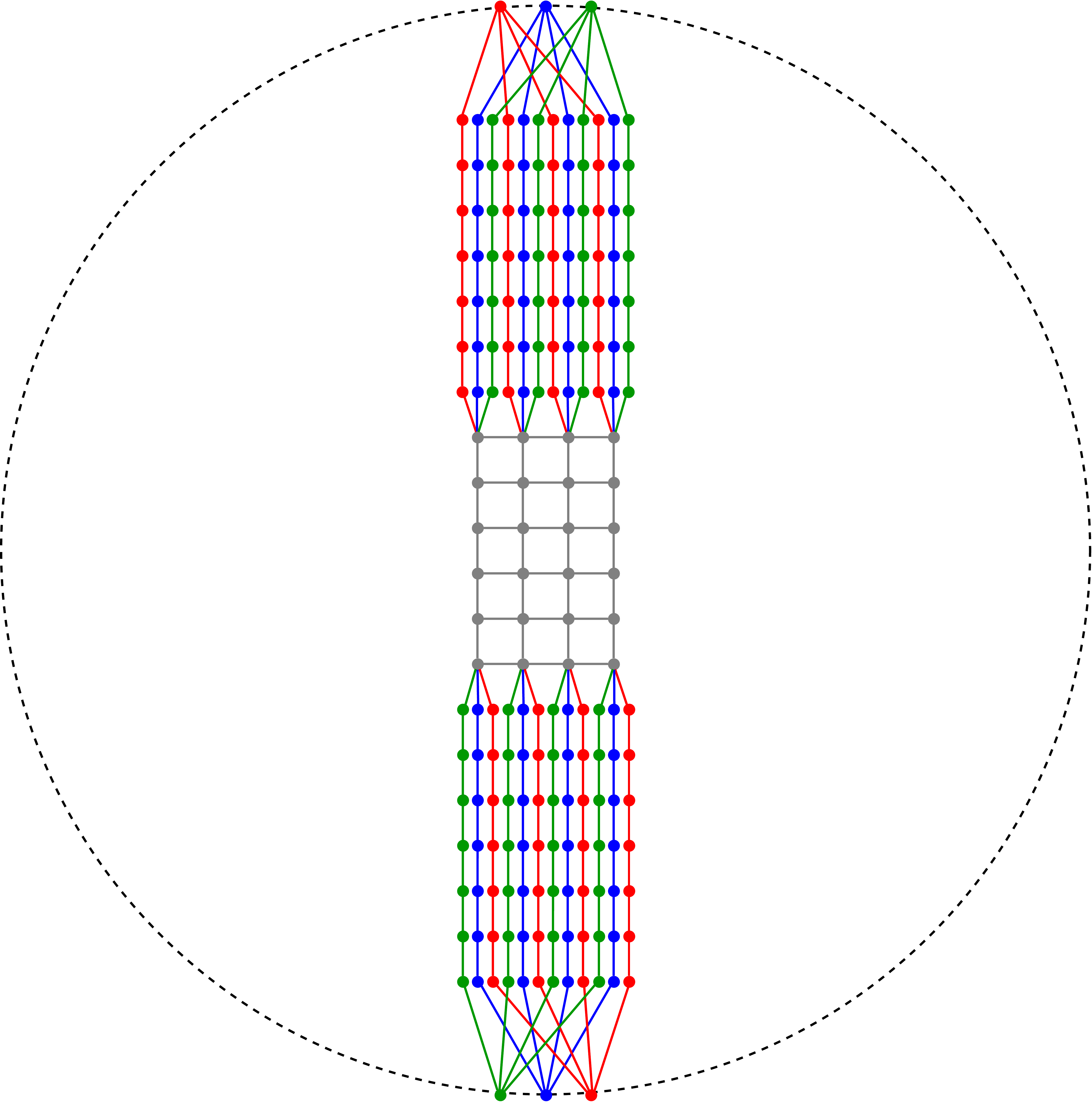}
    \caption{$G_{n, m, k, l}$. The graph is drawn on a projective plane. The vertices of the same color on the dotted cycle are identical. $W$ is the set of vertices that are on the dotted cycle.}
    \label{fig:corner}
\end{figure}
\end{Definition}

\begin{lemma}\label{lem:tw-extend-by-path}
Let $G'$ be a graph with $tw(G') \ge 1$ and let $P$ be a path. Let $G$ be the graph that is obtained by adding an edge between $u \in V(G')$ and $v \in V(P)$. Then, $tw(G) = tw(G')$.
\end{lemma}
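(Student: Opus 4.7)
The plan is to prove both inequalities separately. The easy direction is $tw(G) \ge tw(G')$: since $G'$ is an induced subgraph of $G$, and treewidth is monotone under taking subgraphs, this is immediate. The entire content of the lemma is the reverse inequality $tw(G) \le tw(G')$, which I would establish by an explicit construction of a tree decomposition of $G$ whose maximum bag size equals that of an optimal tree decomposition of $G'$.

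Concretely, start with an optimal tree decomposition $(\mathcal{T}', f')$ of $G'$ with $b(\mathcal{T}', f') = tw(G') + 1$. Let $P$ have vertex sequence $w_1, w_2, \ldots, w_\ell$ along the path, and suppose $v = w_i$ for some $i$. Choose any node $\tau_u \in V(\mathcal{T}')$ with $u \in f'(\tau_u)$, which exists by property (1) of tree decompositions. Introduce a new node $\tau_0$ with bag $\{u, v\}$ and make it a neighbor of $\tau_u$ in the new tree. Then from $\tau_0$ I grow two disjoint paths of new nodes: one with successive bags $\{w_i, w_{i-1}\}, \{w_{i-1}, w_{i-2}\}, \ldots, \{w_2, w_1\}$, and another with bags $\{w_i, w_{i+1}\}, \{w_{i+1}, w_{i+2}\}, \ldots, \{w_{\ell-1}, w_\ell\}$ (either side is omitted if $i = 1$ or $i = \ell$). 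Call the resulting tree $\mathcal{T}$ with bag function $f$.

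Next I would verify the three tree-decomposition axioms for $(\mathcal{T}, f)$. Vertex coverage is clear since $V(G') \subseteq \bigcup f'$ and every $w_j \in V(P)$ appears in one of the new bags. Every edge of $G'$ is covered by an old bag; the new edge $(u,v)$ is covered by the bag of $\tau_0$; and each path edge $(w_j, w_{j+1})$ is covered by its corresponding bag. For connectedness, the subtree of any vertex of $G'$ other than $u$ is unchanged; the subtree of $u$ is the old one augmented by $\tau_0$, which is attached to $\tau_u$, hence still connected; the subtree of $v = w_i$ consists of $\tau_0$ together with the two neighboring initial bags of the branching paths, forming a connected subtree around $\tau_0$; and each interior path vertex $w_j$ ($j \neq i$) appears in exactly the two consecutive new bags on a single arm, which are adjacent in $\mathcal{T}$.

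Finally, every new bag has size exactly $2$. Because $tw(G') \ge 1$, the old decomposition has maximum bag size at least $2$, so $b(\mathcal{T}, f) = b(\mathcal{T}', f') = tw(G') + 1$, giving $tw(G) \le tw(G')$. Combined with the subgraph bound, this yields $tw(G) = tw(G')$. I expect no serious obstacle; the only subtle point is that the hypothesis $tw(G') \ge 1$ is genuinely needed, since attaching a path via one edge to an edgeless graph would increase treewidth from $0$ to $1$, and this is precisely what rules out the degenerate case in which the size-$2$ appended bags could dominate the old maximum.
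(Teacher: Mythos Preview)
Your proof is correct and follows essentially the same construction as the paper: attach to a bag containing $u$ a chain of size-$2$ bags covering the path, and use $tw(G')\ge 1$ to ensure these new bags do not exceed the old maximum. Your version is in fact slightly more careful than the paper's, since you treat the general case where $v$ is an interior vertex of $P$ (growing two arms from $\tau_0$), explicitly verify the three decomposition axioms, and state the easy direction $tw(G)\ge tw(G')$ via subgraph monotonicity, all of which the paper leaves implicit.
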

\begin{proof}
Consider an optimal tree decomposition of $G'$ and let $X$ be one of the bags that contain $u$. Let $v_1, \ldots v_k$ the vertices that constructs $P$ such that $v_i$ and $v_{i + 1}$ are connected. We can assume that $u$ and $v_1$ are connected in $G$. Prepare new bags $X_0 = \{u, v_1\}$ and $X_i = \{v_i, v_{i + 1}\}$ for $1 \le i \le k - 1\}$. Then, extend the tree decomposition of $G'$ by connecting $X$ and $X_0$, and then connecting $X_i$ and $X_{i + 1}$ for $1 \le i \le k - 1$. Then, this decomposition has width $tw(G')$ and is an optimal tree decomposition of $G$.
\end{proof}

\begin{lemma}\label{lem:tw-of-G}
For $m \ge n$, the treewidth of $G_{n, m, k, l}$ is at most $n + l$.
\end{lemma}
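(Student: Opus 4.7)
The plan is to bound $tw(G_{n,m,k,l})$ by combining two simple facts: removing $W$ leaves a graph of treewidth $n$, and adding $W$ back into every bag of a tree decomposition increases the treewidth by at most $|W| = l$.

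For the first fact, observe that after deleting $W$ from $G_{n,m,k,l}$ the edges between the $P_{2^k-1}$-paths and $W$ disappear, so the resulting graph is the grid $P_m \times P_n$ together with $2nl$ disjoint copies of $P_{2^k-1}$, each still attached by a single edge to a boundary vertex $(i,j)$ with $j \in \{1,m\}$. Starting from the grid, whose treewidth is $\min(m,n) = n$, I would apply Lemma~\ref{lem:tw-extend-by-path} once per attached path (each application extends the current graph by one pendant $P_{2^k-1}$-path via a single new edge to a boundary vertex, and the current graph has treewidth at least $n \ge 1$). Iterating this $2nl$ times yields $tw(G_{n,m,k,l} \setminus W) = n$.

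For the second fact, I would take any tree decomposition $(\mathcal{T}, f)$ of $G_{n,m,k,l} \setminus W$ with maximum bag size $n+1$ and define $f'(\tau) := f(\tau) \cup W$. Then $(\mathcal{T}, f')$ is a valid tree decomposition of $G_{n,m,k,l}$: vertices are covered (grid and path vertices from $f$, vertices of $W$ from the added $W$); every edge is covered (grid and path edges by inheritance, while each edge joining a path endpoint to a vertex $v \in W$ lies in any bag that originally contained that endpoint, since $v \in W \subseteq f'(\tau)$ now); and the connected-subtree condition holds (trivially for each $w \in W$ because $w$ lies in every bag, and by inheritance from $f$ for every other vertex). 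The maximum bag size becomes $n+1+l$, hence $tw(G_{n,m,k,l}) \le n + l$.

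The only step that is not essentially immediate is the first fact, but the topology after deletion is exactly what Lemma~\ref{lem:tw-extend-by-path} was designed for, so the main obstacle reduces to noting that $G_{n,m,k,l} \setminus W$ is the grid with $2nl$ pendant paths glued in one vertex each.
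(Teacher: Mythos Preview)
Your proof is correct and follows essentially the same approach as the paper: first use Lemma~\ref{lem:tw-extend-by-path} (iterated over the $2nl$ pendant paths) to conclude $tw(G_{n,m,k,l}\setminus W)=tw(P_m\times P_n)=n$, then add $W$ to every bag of an optimal tree decomposition to obtain width $n+l$. The paper's argument is identical, only stated more tersely.
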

\begin{proof}
By Lemma \ref{lem:tw-extend-by-path}, $tw(G_{n, m, k, l} \backslash W) = tw(P_m \times P_n) = n$. Consider an optimal tree decomposition of $G_{n, m, k, l} \backslash W$ and then extend all of the bags by adding the vertices in $W$. Then, we obtain a tree decomposition of $G_{n, m, k, l}$ of width $n + l$.
\end{proof}

\begin{lemma}\label{lem:main}
For $m \ge 2^{2n + 1} - 1$ and $k \ge 2n + 2td(P_m \times P_n)$, $\{(i, j) \mid 1 \le i \le n \text{ and } j = 1, m\}$ is the unique top separator of optimal treedepth decomposition of $G_{n, m, k, 3}$.
\end{lemma}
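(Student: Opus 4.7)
The plan is to establish $td(G_{n,m,k,3}) = 2n+k+1$ and then show that the only top separator of an optimal decomposition is $S^0 := \{(i,j) : j \in \{1,m\}\}$.

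For the upper bound I would exhibit the decomposition whose top is the path through the $2n$ vertices of $S^0$, below which sit (a) the inner grid $P_n \times P_{m-2}$ of height at most $td(P_m \times P_n) \le k$ (by the hypothesis $k \ge 2n + 2\, td(P_m \times P_n)$) and (b) for each $w \in W$ a ``fan'' consisting of $w$ with $2n$ pendant copies of $P_{2^k-1}$ of height $k+1$, yielding $td(G_{n,m,k,3}) \le 2n+k+1$. For the matching lower bound, I would consider the induced subgraph $D^{(1)}$ of $G_{n,m,k,3}$ consisting of the grid, the $2n$ arms to $w_1$, and $w_1$ itself. Deleting $w_1$ produces an induced copy of $D_{n,m,k}$ of treedepth $2n+k$ by Lemma \ref{lem:double-broom-exact-td}; since in any optimal decomposition of $D_{n,m,k}$ the $2n$ arm endpoints lie in $2n$ distinct subtrees, placing $w_1$ as their common ancestor must add at least one to the height, giving $td(D^{(1)}) = 2n+k+1$ and hence $td(G_{n,m,k,3}) \ge 2n+k+1$.

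For uniqueness, let $T$ be any optimal decomposition with $ts(T) = S$, $|S| = s$, so $s + \max_{C} td(G[C]) = 2n+k+1$, and I case-split on $s$. If $s > 2n$, then $\max_C td(C) \le k$, forcing every arm to be cut and ruling out any fan at a remaining $w \in W$; this requires $S \supseteq W \cup S^0$, so $s + \max \ge 2n+k+3$, a contradiction. If $s < 2n$, by Lemma \ref{lem:grid-vert-sep} applied to $T$ restricted to the grid (using $m \ge 2^{2n+1}-1$), some column $j^*$ is absent from $top(T, 2n)$; an argument in the spirit of Lemma \ref{lem:double-broom-opt-sep}, applied to each of the three $w$'s (using $k \ge 2n + 2\, td(P_m \times P_n)$ to view each $w$ together with the grid and its $2n$ arms as a ``partial double broom'' sharing the grid), then forces any separator achieving the recurrence to have at least $2n$ vertices, contradicting $s < 2n$. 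The decisive case is $s = 2n$ with $S \ne S^0$: some grid endpoint $v \in S^0 \setminus S$ is missing from $S$, and I would show that the component $C$ of $G \setminus S$ containing $v$ contains a $P_{2^{k+1}}$-subgraph. Since $v$ is adjacent to three full arms leading to $w_1, w_2, w_3$, the $2n$-vertex budget of $S$ cannot simultaneously disconnect $v$ from every $w_r$ and from every ``sibling'' partial arm at each $w_r$ coming from a removed grid endpoint; hence some concatenation $v \to \mathrm{arm} \to w_r \to \mathrm{sibling\ arm}$ of $2^{k+1}$ vertices survives in $C$, giving $td(C) \ge k+2$ and contradicting $\max = k+1$.

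The main obstacle is this last subcase: verifying that every size-$2n$ separator $S \ne S^0$ leaves a $P_{2^{k+1}}$ in the component of the missing endpoint, despite the freedom $S$ has to cut arms at internal positions and the rerouting possibilities afforded by the three shared $w$'s. Careful bookkeeping of how the $2n$-vertex budget of $S$ distributes across grid endpoints, arm interiors, and $W$-vertices is essential, and the choice $l = 3$ in the definition of $G_{n,m,k,3}$ is exactly what forces this budget to fail.
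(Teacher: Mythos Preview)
Your route is quite different from the paper's. The paper does \emph{not} compute $td(G_{n,m,k,3})$ exactly up front, nor does it case-split on $|ts(T)|$. Instead it first squeezes $td(G_{n,m,k,l})$ between $2n+k$ and $2n+k+1$ for all $l$, uses the comparison $td(G_{n,m,k,2})=td(G_{n,m,k,3})$ to argue that an optimal top separator can contain only grid vertices (a $W$--vertex in $ts(T)$ would let one shave a level off and beat $td(G_{n,m,k,2})$), and then runs a short induction on $n$: removing the root $(i_v,j_v)$ from $ts(T)$ leaves a graph with a $G_{n-1,m-1,k,3}$ minor whose optimal top separator is already pinned down. This two-step reduction (first restrict to $V_G$, then induct) sidesteps the global budget bookkeeping you set up in the $s=2n$ case.

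Your plan has two concrete soft spots. First, the lower bound: the assertion that in \emph{every} optimal decomposition of $D_{n,m,k}$ the $2n$ far arm-endpoints are pairwise incomparable is not something Lemma~\ref{lem:double-broom-exact-td} gives you, and it is what your ``adding $w_1$ forces $+1$'' step hinges on. Second, and more seriously, the $s<2n$ case: Lemma~\ref{lem:double-broom-opt-sep} bounds the depth $d$ at which $top(T,d)$ first hits every $V_i$ and every $U_i$ in $D_{n,m,k}$; it says nothing directly about $|ts(T)|$ in $G_{n,m,k,3}$, and $G_{n,m,k,3}$ does admit small minimal separators (two interior vertices on the same arm already separate a sub-path). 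So ruling out $s<2n$ really requires you to lower-bound the treedepth of the large residual component after such a cut, which drags you back into the same structural analysis the paper avoids via the $V_G$ restriction. Your $s=2n$ budget sketch is plausible (and the role of $l=3$ is indeed to make the arm-cutting budget fail), but as you note it is not carried out; the paper's induction on $n$ replaces that entire computation with a one-line minor argument.
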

\begin{proof}
By considering a treedepth decomposition whose top separator is $\{(i, j) \mid 1 \le i \le n \text{ and } j = 1, m\}$ and then remove $W$, $td(G_{n, m, k, l}) \le 2n + 1 + k$.

Since $2n + k = td(D_{n, m, k}) \le td(G_{n, m, k, 1}) \le td(G_{n, m, k, 2}) \le td(G_{n, m, k, 3}) \le 2n + 1 + k$, we have $td(G_{n, m, k, 2}) = td(G_{n, m, k, 3})$.

Then, for all optimal treedepth decomposition of $G_{n, m, k, 3}$, the top separators only contain the vertices in $V_G(G_{n, m, k, 3})$.

We show the lemma by induction on $n$. When $n = 1$, we have to take two nodes from $V_G(G_{1, m, k, 3})$ to get a separator and the only optimal top separator is $\{(1, 1), (1, m)\}$.

Suppose we have shown the lemma for $n \le n'$. Let $T$ be an optimal treedepth decomposition for $G_{n' + 1, m, k, 3}$. Let $v = (i_v, j_v) \in ts(T)$ which is removed in step $1$. Then, we have $G_{n', m - 1, k, 3}$ as a minor of $G_{n' + 1, m, k, 3} \backslash v$. The optimal top separator of this graph have to contain $\{(i, j) \mid i \neq i_v \text{ and } j = 1, m\}$ because otherwise it contradicts to the induction hypothesis. To get a separator, we have to take at least two more nodes and the only choice is $\{(i_v, 1), (i_v, m)\}$ and the case is shown for $n = n' + 1$.

Therefore, for all $n$, any optimal treedepth decomposition of $G_{n, m, k, 3}$ has a unique top separator.
\end{proof}

Theorem \ref{thm:main-2} is shown as a corollary of Lemma \ref{lem:main}.
\begin{proof}[Proof of Theorem \ref{thm:main-2}]
Consider $G_{n, m, k, 3}$ for $m \ge 2^{2n + 1} - 1$ and $k \ge 2n + 2td(P_m \times P_n)$.
By Lemma \ref{lem:tw-of-G}, we have $td(G_{n, m, k, 3}) \le n + 3$, but any optimal top separator has size $2n$.
\end{proof}
\section{Special Graph Classes}\label{sec:special}
We have shown that the upper bound on the size of an optimal top separator is $2tw$ and this is tight. In this section, we give some graph classes such that the upper bound is smaller than $2tw$. Those graph classes include chordal graphs and outerplanar graphs.

\subsection{Chordal Graphs}
A cycle is chordless if there are two nodes in the cycle that are not adjacent to each others.
A graph $G$ is chordal if there is no chordless cycle of length four or more.

To handle maximal cliques and minimal separators, we introduce clique trees.
\begin{Definition}[Clique tree]
Let $G$ be a graph, and let $\mathcal{V}$ be the set of all maximal cliques in $G$. A clique tree is a tree $\mathcal{T} = (\mathcal{V}, \mathcal{E})$ such that, for every vertex $v$ in $G$, the set of maximal cliques containing $v$ induces a connected subtree of $\mathcal{T}$.
\end{Definition}

Following lemmas states the important characteristics of chordal graphs.
\begin{lemma}[\cite{clique-tree-chordal-equiv}]\label{lem:existence-clique-tree}
A graph $G$ is chordal if and only if there exists a clique tree of $G$.
\end{lemma}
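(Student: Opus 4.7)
The plan is to prove both directions of the equivalence separately.

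For the forward direction (chordal implies a clique tree exists), I would induct on $|V(G)|$. The base case $|V(G)| = 1$ is trivial. For the inductive step, I invoke the classical fact that every chordal graph contains a simplicial vertex $v$, i.e., one whose open neighborhood $N(v)$ is a clique. Since $G \setminus v$ is chordal, the inductive hypothesis gives a clique tree $\mathcal{T}'$ of $G \setminus v$. The clique $N(v)$ is contained in some maximal clique $K$ of $G \setminus v$, which corresponds to a node of $\mathcal{T}'$. If $N(v) = K$, then $N[v]$ is a maximal clique of $G$ that subsumes $K$, and I would simply relabel the corresponding node of $\mathcal{T}'$ with $N[v]$. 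Otherwise $N(v) \subsetneq K$, so $N[v]$ is a new maximal clique of $G$, and I would attach a fresh node labeled $N[v]$ to the node $K$ of $\mathcal{T}'$. A routine check confirms that the maximal cliques of $G$ are exactly the bags of the resulting tree, and that for every vertex of $G$ the bags containing it still form a connected subtree.

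For the backward direction (a clique tree exists implies chordal), I argue by contradiction. Suppose $\mathcal{T}$ is a clique tree of $G$ and that $v_1 v_2 \cdots v_k v_1$ is a chordless cycle with $k \ge 4$. For each $i$, let $\mathcal{T}_{v_i}$ denote the connected subtree of $\mathcal{T}$ induced by the bags containing $v_i$. Because each cycle edge $(v_i, v_{i+1})$ lies in some maximal clique, $\mathcal{T}_{v_i} \cap \mathcal{T}_{v_{i+1}} \neq \emptyset$ cyclically; conversely, since the cycle is chordless, $\mathcal{T}_{v_i} \cap \mathcal{T}_{v_j} = \emptyset$ whenever $v_i, v_j$ are non-adjacent in $G$. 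I would root $\mathcal{T}$ arbitrarily, let $r_i$ be the topmost node of $\mathcal{T}_{v_i}$, and use the elementary fact that for two intersecting subtrees $S_1, S_2$ of a rooted tree, the topmost node of $S_1 \cap S_2$ coincides with the topmost node of $S_1$ or of $S_2$. Choosing $i^{\ast}$ that minimizes the depth of $r_{i^{\ast}}$ (with careful tie-breaking) and applying the fact to the two cycle edges incident to $v_{i^{\ast}}$ forces $r_{i^{\ast}}$ to lie simultaneously in $\mathcal{T}_{v_{i^{\ast} - 1}}$ and $\mathcal{T}_{v_{i^{\ast} + 1}}$. Since $v_{i^{\ast} - 1}$ and $v_{i^{\ast} + 1}$ are non-adjacent in a chordless cycle of length at least $4$, this contradicts $\mathcal{T}_{v_{i^{\ast} - 1}} \cap \mathcal{T}_{v_{i^{\ast} + 1}} = \emptyset$.

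The main obstacle, I expect, is making the extremal argument in the backward direction fully rigorous: when several $r_i$ share minimum depth, the choice of $i^{\ast}$ must be refined so that both cycle neighbors of $v_{i^{\ast}}$ are forced to contain $r_{i^{\ast}}$ rather than hosting the top of the intersection themselves. A cleaner alternative, which I would fall back on if the bookkeeping becomes unwieldy, is to invoke Gavril's characterization of chordal graphs as the intersection graphs of subtrees of a tree as a black box, since a clique tree of $G$ is exactly such an intersection representation. The forward direction is largely routine once a simplicial vertex is in hand; its only subtlety is the case split on whether $N[v]$ subsumes an existing maximal clique of $G \setminus v$ or creates a fresh one.
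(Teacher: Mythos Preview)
The paper does not prove this lemma; it is stated with a citation to \cite{clique-tree-chordal-equiv} and used as a black box, so there is no in-paper argument to compare your proposal against.

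That said, your outline is essentially the standard proof, and the forward direction via simplicial vertices is correct as written. In the backward direction, however, your extremal choice is inverted. The fact you invoke says that for two intersecting subtrees the topmost node of the intersection is the \emph{deeper} of the two roots; hence it is the deeper root that is guaranteed to lie in the other subtree. If $i^{\ast}$ \emph{minimizes} $\operatorname{depth}(r_{i^{\ast}})$, then $r_{i^{\ast}}$ is the shallower root in each of the two incident pairs, and you cannot conclude $r_{i^{\ast}}\in\mathcal{T}_{v_{i^{\ast}\pm 1}}$. The difficulty is not tie-breaking but the strict case. Choosing $i^{\ast}$ to \emph{maximize} $\operatorname{depth}(r_{i^{\ast}})$ repairs the argument immediately: then $r_{i^{\ast}}$ is the deeper root on both sides, so $r_{i^{\ast}}\in\mathcal{T}_{v_{i^{\ast}-1}}\cap\mathcal{T}_{v_{i^{\ast}+1}}$, contradicting non-adjacency of $v_{i^{\ast}-1}$ and $v_{i^{\ast}+1}$ (ties force $r_{i^{\ast}}=r_{i^{\ast}\pm 1}$ and cause no trouble). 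Your fallback to Gavril's characterization is also perfectly legitimate, since a clique tree is exactly a subtree representation of $G$.
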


\begin{lemma}[\cite{clique-tree-sep}]\label{lem:clique-tree-sep}
Let $S$ be a minimal separator of chordal graph $G$, and let $\mathcal{T} = (\mathcal{V}, \mathcal{E})$ be a clique tree of $G$. Then, there exist two cliques $C,C' \in \mathcal{V}$ such that $\{C,C'\} \in \mathcal{E}$ and $C_i \cap C_j = S$.
\end{lemma}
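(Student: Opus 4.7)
The plan is to pick arbitrary witnesses $a,b$ separated by $S$, choose any maximal cliques $C_a\ni a$ and $C_b\ni b$, traverse the unique path $C_a=D_0,D_1,\dots,D_l=C_b$ in the clique tree $\mathcal{T}$, and identify an edge on this path whose intersection is forced to equal $S$. Let $A$ and $B$ denote the connected components of $G\backslash S$ containing $a$ and $b$, respectively.

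First I would classify each $D_i$ according to which side of the separator it lies on. Since $D_i$ is a clique of $G$ and vertices of $A$ and $B$ lie in different components of $G\backslash S$, no $D_i$ can contain both an $A$-vertex and a $B$-vertex, so either $D_i\subseteq A\cup S$ (``$A$-side'') or $D_i\subseteq B\cup S$ (``$B$-side''). Because $a\in D_0$ and $b\in D_l$ and $a,b$ are non-adjacent (else $S$ could not separate them), we have $l\ge 1$, the path starts $A$-side and ends $B$-side, and so some index $i^*$ satisfies $D_{i^*}\subseteq A\cup S$ and $D_{i^*+1}\subseteq B\cup S$. At this transition edge,
\[
D_{i^*}\cap D_{i^*+1}\;\subseteq\;(A\cup S)\cap(B\cup S)\;=\;S,
\]
using $A\cap B=\emptyset$.

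Next I would invoke the standard clique-tree edge property: removing the edge $\{D_{i^*},D_{i^*+1}\}$ from $\mathcal{T}$ splits it into two subtrees whose vertex-unions meet exactly in $D_{i^*}\cap D_{i^*+1}$ and have no other $G$-edges between them. In particular, because $a$ and $b$ lie on opposite sides of the cut and outside the intersection, $D_{i^*}\cap D_{i^*+1}$ is itself an $a$-$b$ separator. Combining with the previous step, it is an $a$-$b$ separator contained in $S$; minimality of $S$ forbids any proper subset from separating $a$ from $b$, forcing $D_{i^*}\cap D_{i^*+1}=S$. Setting $C:=D_{i^*}$ and $C':=D_{i^*+1}$ then completes the proof.

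The main obstacle, and essentially the only non-bookkeeping ingredient, is justifying the clique-tree edge property used in the second step -- that cutting an edge of $\mathcal{T}$ produces a genuine $G$-separator whose vertex set is exactly that edge's intersection. This reduces to the subtree-connectivity axiom in the definition of a clique tree: for any vertex $v$ outside the intersection, the subtree of cliques containing $v$ is connected and therefore lies entirely on one side of the cut, so no edge of $G$ can cross the cut once the intersection is removed. Once that auxiliary fact is established, the rest of the argument is purely set-theoretic; notably, this route avoids even needing the classical fact that every minimal separator of a chordal graph is a clique.
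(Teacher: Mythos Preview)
The paper does not supply its own proof of this lemma; it is quoted from the cited reference with no argument given. Your proof is essentially the standard one and is correct in outline, so there is nothing to compare against on the paper's side.

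There is one small slip worth flagging. The claimed dichotomy ``either $D_i\subseteq A\cup S$ or $D_i\subseteq B\cup S$'' does not follow from ``no $D_i$ contains both an $A$-vertex and a $B$-vertex'': the graph $G\backslash S$ may have components other than $A$ and $B$, and a clique on the $C_a$--$C_b$ path can meet such a third component. (Concretely, take $G=K_{1,3}$ with center $s$, set $S=\{s\}$, and use the clique tree that is the path $\{s,a\}-\{s,c\}-\{s,b\}$; the middle clique lies in neither $A\cup S$ nor $B\cup S$.) The repair is immediate and does not disturb the rest of your argument: let $i^*$ be the largest index with $D_{i^*}\subseteq A\cup S$ (this exists since $D_0$ qualifies and $D_l$ does not). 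Then $D_{i^*+1}$ contains some vertex outside $A\cup S$, and because $D_{i^*+1}$ is a clique this forces $D_{i^*+1}\cap A=\emptyset$, whence
\[
D_{i^*}\cap D_{i^*+1}\ \subseteq\ (A\cup S)\setminus A\ =\ S,
\]
which is all you need. The second step---cutting the edge $\{D_{i^*},D_{i^*+1}\}$, using subtree connectivity to show the intersection is an $a$--$b$ separator, and invoking minimality of $S$---is correct as written.
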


\begin{lemma}[\cite{bouchitte2001treewidth}] \label{lem:clique-treewidth}
Let $G$ be a chordal graph, and let $\mathcal{V}$ be the set of all maximal cliques in $G$. Then, we have
$$tw(G) = \max_{C \in \mathcal{V}} |C| - 1.$$
\end{lemma}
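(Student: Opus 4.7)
The plan is to prove the two matching inequalities $tw(G) \le \max_{C \in \mathcal{V}}|C| - 1$ and $tw(G) \ge \max_{C \in \mathcal{V}}|C| - 1$ separately, each by a short argument.

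For the upper bound, I would apply Lemma \ref{lem:existence-clique-tree} to obtain a clique tree $\mathcal{T} = (\mathcal{V}, \mathcal{E})$ of $G$, and then reinterpret it directly as a tree decomposition by defining the bag of each node $C \in \mathcal{V}$ to be the clique $C$ itself. Property 1 (bags cover $V(G)$) holds because every vertex lies in at least one maximal clique; property 2 (each edge fits inside some bag) holds because each edge $\{u,v\}\in E(G)$ is contained in any maximal clique that extends $\{u,v\}$; and property 3 (connected-subtree condition) is exactly the defining property of a clique tree. The resulting tree decomposition has width $\max_{C \in \mathcal{V}}|C| - 1$, giving the upper bound.

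For the lower bound, I would appeal to the classical clique-containment fact: in any tree decomposition of any graph, every clique of $G$ is fully contained in some bag. Taking a maximum maximal clique $C^{*}$ forces any tree decomposition to have a bag of size at least $|C^{*}|$, so $tw(G) \ge |C^{*}| - 1 = \max_{C \in \mathcal{V}}|C| - 1$. To prove this clique-containment fact, I would use the Helly property of subtrees of a tree: for a clique $K = \{v_1, \ldots, v_k\}$, each $\mathcal{T}_{v_i} := \{\tau \in V(\mathcal{T}) : v_i \in f(\tau)\}$ is a nonempty connected subtree by the tree decomposition axioms, and the edge-covering property guarantees $\mathcal{T}_{v_i} \cap \mathcal{T}_{v_j} \ne \emptyset$ for all $i \ne j$; the Helly property then produces a common node whose bag contains all of $K$.

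The main obstacle is really only the Helly step in the lower bound, which is a standard result about pairwise-intersecting subtrees of a tree and can be handled by a short induction on $k$. Everything else follows immediately from the definitions of tree decomposition and clique tree together with Lemma \ref{lem:existence-clique-tree}.
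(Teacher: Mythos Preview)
The paper does not prove this lemma at all; it is stated with a citation to \cite{bouchitte2001treewidth} and used as a black box. Your proposal supplies the standard textbook argument, and it is correct: the clique tree (from Lemma~\ref{lem:existence-clique-tree}) is itself a tree decomposition of width $\max_{C\in\mathcal{V}}|C|-1$, giving the upper bound, while the Helly property of subtrees of a tree forces every clique into a single bag of any tree decomposition, giving the lower bound. Nothing is missing, and the Helly step is indeed the only nontrivial ingredient; note also that your lower-bound argument applies to arbitrary graphs, not just chordal ones, so chordality is used only for the upper bound via the existence of a clique tree.
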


The upper bound for chordal graph is stated as the following theorem.
\begin{theorem}
Let $G$ be a chordal graph that is not complete and let $T$ be any treedepth decomposition of $G$. Then, $ts(T) \le tw(G)$.
\end{theorem}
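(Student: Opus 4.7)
The plan is to bound $|ts(T)|$ by reading it off the clique-tree structure of $G$, with no need to construct or modify $T$. Set $S := ts(T)$. Within the framework set up in Section \ref{naive-dp}, treedepth decompositions are built from the recursion in Equation \eqref{eq:td-dp-ts}, and in particular the Proposition at the end of Section \ref{sec:preliminary} places the top separators of interest in $\Delta_G$; I will therefore take the hypothesis ``any treedepth decomposition'' to mean any $T$ with $ts(T) \in \Delta_G$ (the claim is trivially false without this restriction, e.g.\ for a path-like $T$ on a non-complete graph, where $ts(T) = V(G)$).

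First I would apply Lemma \ref{lem:existence-clique-tree} to obtain a clique tree $\mathcal{T} = (\mathcal{V}, \mathcal{E})$ of $G$. Then I would apply Lemma \ref{lem:clique-tree-sep} directly to the minimal separator $S$, which yields two maximal cliques $C, C' \in \mathcal{V}$ adjacent in $\mathcal{T}$, i.e.\ $\{C, C'\} \in \mathcal{E}$, with $C \cap C' = S$.

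The next step is to note that $C$ and $C'$ are distinct (the endpoints of an edge of $\mathcal{T}$) and that maximality of both prevents either from containing the other; hence $S = C \cap C' \subsetneq C$, so $|S| \le |C| - 1$. Combining with Lemma \ref{lem:clique-treewidth}, which gives $|C| \le tw(G) + 1$, I obtain $|ts(T)| = |S| \le tw(G)$, as required.

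The main obstacle I anticipate is interpretive rather than technical: once the hypothesis is read as ``any $T$ whose top separator lies in $\Delta_G$'' (the class naturally produced by Equation \eqref{eq:td-dp-ts} and by the Proposition of Section \ref{sec:preliminary}), the chordal-specific content collapses to the short inequality chain above. Nothing in the argument rearranges $T$ or selects a favourable separator — $T$ is taken exactly as given and $S = ts(T)$ is just its top separator, with the clique-tree description of chordal minimal separators doing all the work.
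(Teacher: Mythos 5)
Your proof is correct and follows essentially the same route as the paper's: obtain a clique tree via Lemma \ref{lem:existence-clique-tree}, identify the top separator with an edge intersection $C \cap C'$ via Lemma \ref{lem:clique-tree-sep}, and bound $|C \cap C'| \le |C| - 1 \le tw(G)$ via Lemma \ref{lem:clique-treewidth}. Your interpretive caveat — that the statement only makes sense when $ts(T)$ is a minimal separator — is a fair observation about an assumption the paper's proof also makes implicitly, and your explicit justification that maximality forces $C \cap C' \subsetneq C$ is a small but welcome tightening of the paper's inequality chain.
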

\begin{proof}
By Lemma \ref{lem:existence-clique-tree}, there exist a clique tree $\mathcal{T}$ . By Lemma \ref{lem:clique-tree-sep}, we know that, for any minimal separator $S \in \Delta_G$, there exist  $C, C' \in \mathcal{V}$ such that $C \neq C'$ and $C \cap C' = S$. We then know by Lemma \ref{lem:clique-treewidth} that 
\begin{align*}
|S| & \le \max\{|C|, |C'|\} - 1 \\
& \le \max_{C \in \mathcal{V}} |C| - 1\\
& = tw(G).
\end{align*}
\end{proof}

\subsection{Outerplanar Graphs}
A graph $G$ is outerplanar if it has an embedding on the surface of a sphere such that every edge does not cross with each other and all vertices are on the same face of the embedding.

The following lemma characterizes outerplanar graphs in terms of the graph minor.
\begin{lemma}[\cite{syslo1979characterizations}]\label{lem:outerplanar-minor-free}
A graph $G$ is outerplanar if and only if $G$ is does not contain $K_{2, 3}$ or $K_4$ as a minor, where $K_{2, 3}$ is the complete bipartite graph between $2$ nodes and $3$ nodes, and $K_4$ is the complete graph with $4$ nodes.
\end{lemma}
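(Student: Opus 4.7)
The plan is to prove both directions of the equivalence separately, reducing the harder direction to Wagner's planarity characterization via an apex-vertex construction.

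For the forward direction, I would first verify directly that neither $K_4$ nor $K_{2,3}$ is itself outerplanar: a short case analysis shows that any attempted embedding with all vertices on a single face of the sphere forces two edges to cross. I would then argue that the class of outerplanar graphs is closed under taking minors, since edge deletion, edge contraction, and vertex deletion can each be performed on an outerplanar embedding while keeping all remaining vertices on the outer face. Consequently, an outerplanar $G$ can contain neither $K_4$ nor $K_{2,3}$ as a minor.

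For the reverse direction, let $G^+$ be the graph obtained from $G$ by adding a new universal vertex $v^*$ adjacent to every vertex of $G$. The key observation is that $G$ is outerplanar if and only if $G^+$ is planar. One direction is immediate: given an outerplanar embedding of $G$, place $v^*$ in the outer face and connect it to every vertex without crossings. Conversely, if $G^+$ is planar, then since $v^*$ is adjacent to every vertex of $G$, every vertex of $G$ lies on the boundary of some face incident to $v^*$; removing $v^*$ merges all such faces into a single face of $G$ whose boundary contains every vertex, exhibiting $G$ as outerplanar.

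The next step is to translate forbidden minors across this reduction: $G^+$ has a $K_5$ minor if and only if $G$ has a $K_4$ minor, and $G^+$ has a $K_{3,3}$ minor if and only if $G$ has a $K_{2,3}$ minor. The backward implications are obvious by extending any minor model in $G$ with the singleton branch set $\{v^*\}$. For the forward implications, if a minor model in $G^+$ uses $v^*$ inside some branch set $B$, we may shrink $B$ to $\{v^*\}$ (the other members can be discarded, because $v^*$ is universal and so is adjacent to every other branch set), and then deleting $\{v^*\}$ yields the desired smaller minor in $G$; if $v^*$ does not appear in any branch set, the minor already lives entirely inside $G$, which is even stronger than needed. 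Combining these equivalences with Wagner's theorem, that a graph is planar iff it has no $K_5$ and no $K_{3,3}$ minor, completes the argument.

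The main obstacle is the care needed in the minor-translation step, particularly ensuring that branch sets containing $v^*$ together with other vertices can always be normalized to the singleton $\{v^*\}$; once this is handled, the rest of the argument reduces to invoking planarity machinery as a black box.
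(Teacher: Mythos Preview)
The paper does not prove this lemma at all; it is quoted from \cite{syslo1979characterizations} and used as a black box in the proof of Theorem~\ref{thm:outerplanar}. There is therefore no ``paper's own proof'' to compare against.

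That said, your argument is correct and is essentially the classical one. The apex-vertex reduction (``$G$ is outerplanar iff $G^+ := G + v^*$ is planar'') together with Wagner's theorem is the standard route, and your normalization step---replacing any branch set containing $v^*$ by the singleton $\{v^*\}$---is exactly the right way to transfer a $K_5$ or $K_{3,3}$ minor of $G^+$ down to a $K_4$ or $K_{2,3}$ minor of $G$. The only place to be slightly careful is the case where $v^*$ lies in none of the branch sets: then $G$ itself already contains a $K_5$ (respectively $K_{3,3}$) minor, and you need the extra remark that $K_4$ is a minor of $K_5$ and $K_{2,3}$ is a subgraph of $K_{3,3}$, so the desired conclusion still follows. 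You do note this (``which is even stronger than needed''), so the argument is complete.
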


\begin{Definition}[Full Component]
Let $G$ be a graph and let $S$ be a separator. A connected component of $G \backslash S$ is called a full component associated with $S$ if $N(C) = S$, where $N(C)$ is the open neighbors of $C$ in graph $G$.
\end{Definition}

The following lemma characterizes minimal separators by full components.
\begin{lemma}[\cite{golumbic2004algorithmic}]\label{lem:full-comp}
Let $G$ be a graph and let $S$ be a separator. Then, $S$ is a minimal separator if and only if there exist two components $C_1$ and $C_2$ of $G \backslash S$ such that both $C_1$ and $C_2$ are full components associated with $S$.
\end{lemma}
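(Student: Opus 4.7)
The plan is to handle both directions of the equivalence separately, each by a direct argument from the definitions of minimal separators and full components.

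For the ``if'' direction, I would assume $C_1$ and $C_2$ are two full components associated with $S$, pick vertices $a \in C_1$ and $b \in C_2$, and verify the two requirements of minimality in turn. First, $S$ is certainly an $a$-$b$ separator since $a$ and $b$ lie in different components of $G \setminus S$. Second, for minimality, I would fix any $v \in S$: fullness of $C_1$ and $C_2$ gives neighbors $u_1 \in C_1$ and $u_2 \in C_2$ of $v$, so concatenating an $a$-$u_1$ path inside $C_1$, the edges $u_1 v$ and $v u_2$, and a $u_2$-$b$ path inside $C_2$ produces an $a$-$b$ walk that avoids $S \setminus \{v\}$. Hence no proper subset of $S$ separates $a$ from $b$.

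For the ``only if'' direction, I would start from a minimal $a$-$b$ separator $S$ and let $C_a, C_b$ denote the components of $G \setminus S$ containing $a$ and $b$. By symmetry it suffices to prove that $C_a$ is a full component, i.e., that every $v \in S$ has a neighbor in $C_a$. I would argue by contradiction: if some $v \in S$ has no neighbor in $C_a$, then reinserting $v$ into $G \setminus S$ cannot enlarge the component of $a$, so the component of $a$ in $G \setminus (S \setminus \{v\})$ is still exactly $C_a$ and therefore still disjoint from $b$. This shows $S \setminus \{v\}$ is already an $a$-$b$ separator, contradicting the minimality of $S$. Repeating the argument for $C_b$ yields two full components, as required.

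The only subtle point, which I would double-check carefully, is the reachability claim in the ``only if'' direction: one has to confirm that putting $v$ back cannot somehow create a new path from $a$ to $b$ through $v$'s neighbors outside $C_a$. This is exactly where the hypothesis that $v$ has no neighbor in $C_a$ is used, since then any walk in $G \setminus (S \setminus \{v\})$ starting at $a$ cannot traverse $v$ at its first step, and an easy induction on walk length shows that the component of $a$ is literally unchanged. Once this reachability subtlety is pinned down, both directions close cleanly and the equivalence follows.
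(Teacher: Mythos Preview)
Your argument is correct and is the standard textbook proof of this characterization. Note that the paper does not supply its own proof of this lemma---it is simply quoted from \cite{golumbic2004algorithmic}---so there is nothing to compare against; your write-up fills in exactly the argument one finds in that reference.
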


The upper bound for outerplanar graphs is stated as follows.
\begin{theorem}\label{thm:outerplanar}
Let $G$ be an outerplanar graph that is not complete and let $T$ be any optimal treedepth decomposition of $G$. Then, $ts(T) \le tw(G) \le 2$.
\end{theorem}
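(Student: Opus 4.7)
The plan is to follow the template of the chordal case: first bound $tw(G)$ from above, then bound the size of every minimal separator, and finally use Equation \eqref{eq:td-dp-ts} to realize both bounds by a suitable optimal decomposition. For the treewidth inequality I would invoke Lemma \ref{lem:outerplanar-minor-free} to observe that $G$ is $K_4$-minor-free, and then appeal to the classical fact that $K_4$-minor-free (series-parallel) graphs have treewidth at most two.

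The structural heart of the proof is the claim that $|S| \le 2$ for every minimal separator $S \in \Delta_G$. Suppose for contradiction that $|S| \ge 3$ and fix three vertices $s_1, s_2, s_3 \in S$. By Lemma \ref{lem:full-comp}, there are two distinct full components $C_1, C_2$ of $G \setminus S$ with $N(C_i) = S$, so every $s_j$ has a neighbor in each $C_i$. Contracting each connected $C_i$ into a single vertex $c_i$ and discarding the remaining vertices outside $\{s_1, s_2, s_3\}$ exhibits $K_{2,3}$ as a minor of $G$, with sides $\{c_1, c_2\}$ and $\{s_1, s_2, s_3\}$; this contradicts Lemma \ref{lem:outerplanar-minor-free}, so $|S| \le 2$.

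With both bounds in hand, I would use the earlier Proposition stating that some treedepth decomposition has its top separator equal to a minimal separator, together with Equation \eqref{eq:td-dp-ts}, to pick an optimal decomposition $T$ with $ts(T) = S^*$ for some $S^* \in \Delta_G^*$; the structural step then gives $|ts(T)| = |S^*| \le 2 = tw(G)$ whenever $tw(G) = 2$, while the remaining case $tw(G) = 1$ puts $G$ in the class of forests, where every minimal separator is a single cut vertex and $|ts(T)| \le 1 = tw(G)$. I expect the principal obstacle to be the $K_{2,3}$-minor extraction, specifically verifying that contracting each full component faithfully preserves edges to all three chosen $s_j$'s; the remainder is bookkeeping, modulo the same mild subtlety present in the chordal subsection, where ``any optimal $T$'' is naturally realized by the optimal $T$ built via Equation \eqref{eq:td-dp-ts}.
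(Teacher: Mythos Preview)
Your proposal is correct and follows essentially the same route as the paper: bound $tw(G)\le 2$, then show every minimal separator has size at most $2$ by contracting two full components (Lemma~\ref{lem:full-comp}) to produce a $K_{2,3}$ minor, contradicting Lemma~\ref{lem:outerplanar-minor-free}; the $tw(G)=1$ case is handled separately as trees. You even flag the same ``any versus some optimal $T$'' subtlety that the paper leaves implicit.
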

\begin{proof}
Let $G$ be an outerplanar graphs. It is well known that the treewidth of outerplanar graph is at most $2$. 
When the treewidth is $1$, the graph is a tree. We know that all minimal separators of the graph tree have size $1$. Therefore, we have the theorem for when $G$ is a tree.

From now, we consider the outerplanar graphs with treewidth \textit{exactly} $2$.
We will show that our conjecture is true for outerplanar graphs, by showing that every minimal separators of an outerplanar graph have size at most $2$.

 Let $S$ be a minimal separator of $G$. Then, by Lemma \ref{lem:full-comp}, we have at least two full components $C_1$ and $C_2$ associated with $S$.
Suppose $|S| \ge 3$.
Let $v_1 \in C_1$ and $v_2 \in C_2$. Consider a graph $G'$ that is obtained from $G$ by contracting all edges in $C_1$ and $C_2$. In $G'$, $v_1$ and $v_2$ are connected to all nodes in $S$ because $N(C_1) = N(C_2) = S$ in $G$. Then, $G'$ contains a complete bipartite graph between $U = \{v_1, v_2\}$ and $V = S$ as a subgraph. Since $|S| \ge 3$, $G'$ contains $K_{2, 3}$ as a subgraph and $G$ contains $K_{2, 3}$ as a minor. That contradicts Lemma \ref{lem:outerplanar-minor-free}.
Therefore, for any minimal separator $S$, we have $|S| \le 2$.
\end{proof}

\subsubsection{Planar Graphs}
\begin{figure}[t]
    \centering
    \includegraphics[width=0.25\textwidth]{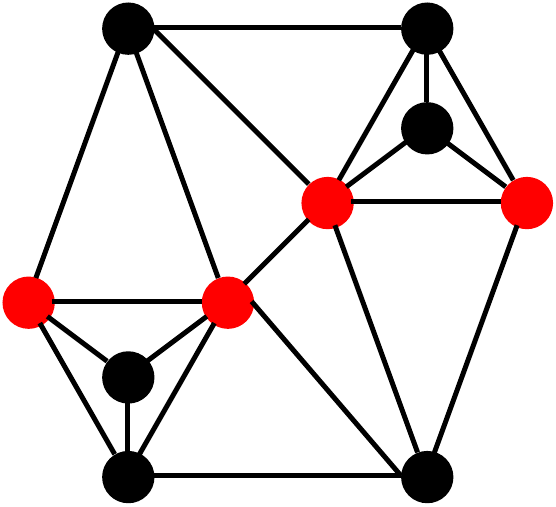}
    \caption{Planar graph with $tw = 3$ and $ts(T) = 4$ for any optimal treedepth decomposition $T$. The red nodes are one of the optimal top separators.}
    \label{fig:planar-corner}
\end{figure}
While we can upper bound the size of any optimal top separator by $tw$ for outerplanar graphs by Theorem \ref{thm:outerplanar}, this bound is not true for planar graphs. Indeed, Figure \ref{fig:planar-corner} has treewidth $3$, but any optimal treedepth decomposition of this graph has a top separator whose size is $4$. So far we do not have a tight upper bound for general planar graphs and we let this problem as future work.

\subsection{Cographs}
A graph $G$ is cograph if any of its subgraph with size $4$ is not a simple path. 

To prove our conjecture for cographs, we use the following lemma:
\begin{lemma}[\cite{treedepth-at-free}] \label{lem:cograph}
If a graph $G$ is cograph, we have
$$tw(G) = td(G) - 1.$$
\end{lemma}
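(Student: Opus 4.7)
The plan is to use the paper's general inequality $td(H) \ge tw(H) + 1$ (noted just before the section on separators) and to prove the matching upper bound $td(G) \le tw(G) + 1$ for cographs by structural induction on the cotree. Recall that a cograph is either a single vertex, a disjoint union $G_1 \sqcup G_2$, or, when $G$ is connected on at least two vertices, a join $G_1 \bowtie G_2$ of two smaller cographs, so the induction has exactly these three cases to check.

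The base case $|V(G)|=1$ is immediate, since $td=1$ and $tw=0$. For the disjoint-union case $G = G_1 \sqcup G_2$, I will use $td(G) = \max(td(G_1), td(G_2))$ together with the analogous identity for treewidth (Proposition~\ref{prop:tw-not-connected}); combining these with the inductive hypothesis $td(G_i) = tw(G_i) + 1$ gives $td(G) = tw(G) + 1$ at once, because the maxima line up term by term.

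The substantive case is the join $G = G_1 \bowtie G_2$ with $n_i = |V(G_i)|$. My plan is to establish two mirrored recurrences,
\begin{align*}
td(G_1 \bowtie G_2) &= \min\bigl(n_1 + td(G_2),\ n_2 + td(G_1)\bigr), \\
tw(G_1 \bowtie G_2) &= \min\bigl(n_1 + tw(G_2),\ n_2 + tw(G_1)\bigr),
\end{align*}
and then substitute $tw(G_i) = td(G_i) - 1$ into the second to conclude $tw(G) = td(G) - 1$. The treedepth recurrence is the easier of the two: the upper bound is obtained by putting all of $V(G_1)$ (or all of $V(G_2)$) on top as an unbranched chain and then recursing below; for the lower bound, every cross edge $(u,v) \in V(G_1) \times V(G_2)$ must be an ancestor--descendant pair, which forces the top portion of any treedepth decomposition of $G$ to be a single chain until one of $V(G_1)$ or $V(G_2)$ has been completely consumed, after which the monotonicity $|S| + td(H \setminus S) \ge td(H)$ gives the remaining cost. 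The treewidth upper bound is constructed by taking an optimal tree decomposition of one side and inserting all vertices of the other side into every bag.

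The main obstacle is the treewidth lower bound $tw(G_1 \bowtie G_2) \ge \min(n_1 + tw(G_2),\ n_2 + tw(G_1))$. My plan there is to take any width-$k$ tree decomposition $\mathcal{T}$ of $G_1 \bowtie G_2$ and analyze the subtrees $T_w \subseteq \mathcal{T}$ of bags containing each vertex $w$; because every cross edge forces $T_u \cap T_v \neq \emptyset$ whenever $u \in V(G_1)$ and $v \in V(G_2)$, a Helly-style case split shows that either $\{T_u\}_{u \in V(G_1)}$ pairwise intersect (so by Helly they share a bag that contains all of $V(G_1)$) or else, in the alternative case, two disjoint $T_{u_1},T_{u_2}$ exist, and then the unique $\mathcal{T}$-path linking them must sit inside every $T_v$, yielding a bag that contains all of $V(G_2)$. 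From such a ``fat bag'' one then restricts $\mathcal{T}$ to the surrounding subtree where the opposite side decomposes, obtaining a tree decomposition of $G_2$ (resp.\ $G_1$) whose width is at most $k - n_1$ (resp.\ $k - n_2$). Making the bookkeeping of this projection precise --- in particular ensuring that the width bound is inherited on all bags and not only on the distinguished one --- is the genuinely delicate step; once it is in place the induction closes immediately.
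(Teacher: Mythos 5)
The paper does not actually prove this lemma --- it is imported from \cite{treedepth-at-free} as a black box --- so your proposal is supplying a proof where the paper has only a citation. Your route (induction on the cotree, reducing everything to the two join recurrences $td(G_1 \bowtie G_2) = \min(n_1 + td(G_2), n_2 + td(G_1))$ and $tw(G_1 \bowtie G_2) = \min(n_1 + tw(G_2), n_2 + tw(G_1))$) is sound and is essentially the classical Bodlaender--M\"ohring argument for cographs; the base case, the disjoint-union case, and the treedepth recurrence all check out (for the treedepth lower bound, note that below the first branching node every branch must avoid one entire side, so the chain above it contains all of $V(G_2)$ or all of $V(G_1)$, and monotonicity finishes it). The one step you flag as delicate --- the projection giving the treewidth lower bound --- does close, and the missing ingredient is just one more application of the Helly property for subtrees. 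Concretely, in the case where some $T_{u_1}, T_{u_2}$ with $u_1,u_2 \in V(G_1)$ are disjoint, every $T_v$ for $v \in V(G_2)$ contains the connecting path, so $\mathcal{T}^* := \bigcap_{v \in V(G_2)} T_v$ is a nonempty subtree all of whose bags contain $V(G_2)$; for each $u \in V(G_1)$ the family $\{T_u\} \cup \{T_v\}_{v \in V(G_2)}$ pairwise intersects, so $T_u$ meets $\mathcal{T}^*$, and for each edge $uu'$ of $G_1$ the family $\{T_u, T_{u'}\} \cup \{T_v\}_{v \in V(G_2)}$ pairwise intersects, so $T_u \cap T_{u'}$ meets $\mathcal{T}^*$. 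Hence restricting the bags on $\mathcal{T}^*$ to $V(G_1)$ is already a valid tree decomposition of $G_1$ of width at most $k - n_2$; the symmetric projection handles the other alternative, and the final substitution $tw(G_i)=td(G_i)-1$ commutes with the outer $\min$. So the proposal is correct and completable as written.
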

We then have the following theorem, which is quite straightforward from the lemma.
\begin{theorem}
Let $G$ be a cograph that is not complete. Then, for any optimal treedepth decomposition $T$ of $G$, $ts(T) \le tw(G)$.
\end{theorem}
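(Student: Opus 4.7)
The plan is to exploit the tight relation $td(G) = tw(G) + 1$ given by Lemma \ref{lem:cograph}, combined with the fact that any optimal treedepth decomposition of a non-complete connected graph must branch at least once.

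First I would argue that every optimal treedepth decomposition $T$ of a connected non-complete graph $G$ has at least one node with two or more children. If not, $T$ would be a path on $n := |V(G)|$ vertices, forcing $td(G) = n$. A short induction on $n$, using $td(G) \le 1 + td(G \setminus u)$ for any vertex $u$, shows that $td(G) = n$ can happen only when $G = K_n$: the inductive hypothesis forces $G \setminus u = K_{n-1}$ for every $u$, whence $G = K_n$. Since $G$ is assumed not complete, this case is excluded, so $T$ must contain a branching node.

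Now let $p$ be the branching node of smallest depth in $T$ and set $d := depth(T, p)$, so that $|ts(T)| = d$ by the definition of the top separator. Because $p$ has at least two children, $T$ contains a node at depth $d+1$, so the height of $T$ is at least $d + 1$. As $T$ is optimal, $height(T) = td(G)$, and applying Lemma \ref{lem:cograph} yields
\[ |ts(T)| = d \le td(G) - 1 = tw(G), \]
which is exactly the claimed bound.

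The only mildly technical step is ruling out the path case; once that is handled, the bound is essentially a one-line consequence of Lemma \ref{lem:cograph}, so I do not anticipate any real obstacle.
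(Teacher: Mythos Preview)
Your proposal is correct and follows essentially the same route as the paper: both reduce the claim to $|ts(T)| \le td(G) - 1$ and then invoke Lemma~\ref{lem:cograph}. The paper's proof compresses the ``$T$ is not a path'' step into the single remark that a non-complete graph has a separator, whereas you spell it out via the induction showing $td(G)=n \Rightarrow G=K_n$; this extra care is warranted but does not constitute a different approach.
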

\begin{proof}
If $G$ is not complete, $G$ has a separator and for any optimal treedepth decomposition $T$, we have $|ts(T)| \le td(G) - 1 = tw(G)$.
\end{proof}

\section{Conclusion}
In this paper, we proved that for any graph $G$, there is an optimal treedepth decomposition such that the size of its top separator is at most $2tw(G)$, i.e., $|ts(T)| \le 2tw(G)$. Also, we proved that this bound is tight, i.e., for any $c < 2$, there exists a graph $G$ such that for any optimal treedepth decomposition $T$ of $G$, $|ts(T)| > c \cdot tw(G)$. This answers to the previous conjecture stated in \cite{pace-xu}. We also showed a smaller upper bound on the size of an optimal top separator for some graph classes such as chordal graphs and outerplanar graphs.

\bibliography{main}
\bibliographystyle{siam}

\end{document}